\newtheorem{theorem}{\textit{Theorem}}
\DeclareMathOperator\erf{erf}
\begin{document} 
\title{CoMP-Assisted NOMA and Cooperative NOMA in Indoor VLC Cellular Systems}
\author{Mohamed Amine Arfaoui$^{*}$,
        Ali Ghrayeb, 
        Chadi Assi, and
        Marwa Qaraqe\vspace{-0.7cm}
\thanks{M. A. Arfaoui and C. Assi are with Concordia Institute for Information Systems Engineering (CIISE), Concordia University, Montreal, Quebec, Canada, e-mails:\{m\_arfaou@encs, assi@ciise\}.concordia.ca.}
\thanks{A. Ghrayeb is with the Electrical and Computer Engineering (ECE) department, Texas A$\&$M University at Qatar, Doha, Qatar, e-mail: ali.ghrayeb@qatar.tamu.edu.}
\thanks{Marwa Qaraqe is with College of Science and Engineering, Hamad Bin Khalifa University, Qatar Foundation, Doha, Qatar, e-mail: mqaraqe@hbku.edu.qa.} 
\thanks{$^*$\textit{Corresponding author: M. A. Arfaoui, m\_arfaou@.concordia.ca}}
}
\maketitle
\thispagestyle{plain}
\begin{abstract}
In this paper, we investigate the dynamic power allocation for a visible light communication (VLC) cellular system consisting of two coordinating attocells, each equipped with one access-point (AP). The coordinated multipoint (CoMP) between the two cells is introduced to assist users experiencing high inter-cell-interference (ICI). Specifically, the coordinated zero-forcing (ZF) precoding is used to cancel the ICI at the users located near the centers of the cells, whereas the joint transmission (JT) is employed to eliminate the ICI at the users located at the edge of both cells and to improve their receptions as well. Furthermore, two multiple access techniques are invoked within each cell, namely, non-orthogonal-multiple-access (NOMA) and cooperative non-orthogonal-multiple-access (C-NOMA). Hence, two multiple access techniques are proposed for the considered multi-user multi-cell system, namely, the CoMP-assisted NOMA scheme and the CoMP-assisted C-NOMA scheme. For each scheme, two power allocation frameworks are formulated each as an optimization problem, where the objective of the former is maximizing the network sum data rate while guaranteeing a certain quality-of-service (QoS) for each user, whereas the goal of the latter is to maximize the minimum data rate among all coexisting users. The formulated optimization problems are not convex, and hence, difficult to be solved directly unless using heuristic methods, which comes at the expense of high computational complexity. To overcome this issue, optimal and low complexity power allocation schemes are derived. In the simulation results, the performance of the proposed CoMP-assisted NOMA and CoMP-assisted C-NOMA schemes are compared with those of the CoMP-assisted orthogonal-multiple-access (OMA) scheme, the C-NOMA scheme and the NOMA scheme, where the superiority of the proposed schemes are demonstrated. Finally, the performance of the proposed schemes and the considered baselines is evaluated while varying various system parameters.
\end{abstract} 
\begin{IEEEkeywords}
Coordinated multi-point (CoMP), coordinated zero-forcing (ZF), cooperative non-orthogonal multiple access (C-NOMA), inter-cell interference (ICI), non-orthogonal multiple access (NOMA), joint transmission (JT), visible light communication (VLC).
\end{IEEEkeywords}
\IEEEpeerreviewmaketitle
\section{Introduction}
\subsection{Motivation}
\indent As the fifth generation (5G) wireless networks are currently under deployment, researchers from both academia and industry started shaping their vision on how the upcoming sixth generation (6G) should be \cite{saad2019vision}. The main goals of 6G networks are not only to fill the gap of the original and unfulfilled promises of 5G or to keep up with the continuous emergence of the Internet of-Things (IoTs) networks but also to handle the exponential increase of the number of devices connected to the Internet, which is predicted to reach $29.3$ billion networked devices by $2023$ \cite{cisco2020cisco,zhang20196g}. Therefore, 6G networks must urgently provide high data rates, seamless and massive connectivity, ubiquitous coverage and ultra-low latency communications in order to reach the preset targets \cite{zhang20196g}. Due to this, researchers from industry and academia are trying to explore new network architectures, new transmission techniques and higher frequency bands, such as the millimeter wave (mmWave), the terahertz (THz), the infrared, and the visible light bands, to meet these high demands \cite{zhang20196g}. \\
\indent Visible light communication (VLC) is an emerging high speed optical wireless communication technology that uses the visible light as the propagation medium in the downlink for the purposes of illumination and wireless communication. VLC operates in the visible light spectrum to communicate data between access points (APs) and users through the existing illumination components, such as the Light Emitting Diodes (LEDs). VLC offers a number of important benefits that have made it favorable for 6G networks \cite{david20186g}, such as the large unregulated visible light band, which translates into higher data rates and higher connectivity in comparison to traditional radio-frequency (RF) networks \cite{haas2015lifi}, the high energy efficiency \cite{tavakkolnia2018energy}, the straightforward deployment that uses off-the-shelf LEDs and photodiodes (PDs) devices at the transmitter and the receiver ends, respectively, and the enhanced security since light does not penetrate through opaque objects \cite{arfaoui2020physical}. \\
\indent As a wireless broadband technology, VLC must support a high number of users with simultaneous network access \cite{obeed2019optimizing}. Traditional multiple access techniques, referred to as orthogonal multiple access (OMA) techniques, allocate the available resources to coexisting users in an orthogonal manner in order to cancel the inter-user interference (IUI). Such multiple access techniques include time division multiple access (TDMA), and frequency division multiple access (FDMA) \cite{obeed2019optimizing}. The main drawback of the aforementioned techniques is that the maximum number of users that can be served is limited by the number of available orthogonal resources. In other words, the OMA techniques cannot provide sufficient resource reuse when the number of coexisting users is high. This makes OMA techniques unable to support a massive number of users, and hence, unable to provide a massive connectivity, which is one of the main requirements of 6G networks. \\
\indent In an effort to increase the throughput and improve the fairness of VLC systems, the non-orthogonal multiple access (NOMA) technique was introduced in the literature. In contrast to OMA techniques, NOMA allows multiple users to exploit the same time/frequency resource blocks at the expense of some IUI, leading to an efficient resource utilization. In particular, downlink power-domain NOMA relies on the superposition coding (SC) concept at the transmitter side to multiplex the data streams of different users in the power domain, and on the successive interference cancellation (SIC) concept at the end users to decode their received data \cite{zhang2014performance}.\footnote{In this paper, the term “NOMA” is restricted to power-domain NOMA as distinct from its code-domain NOMA counterpart.} NOMA operates by allocating different power levels to users based on their channel gains. To decode the data, the strong users first apply SIC to decode the weak users' data, cancel it from their respective receptions, and then decode their data, whereas the weak users proceed to decode their data directly, and hence, suffer from IUI resulting from the superposition of the strong users' data. \\
\indent Despite the great benefits that VLC offers, it suffers from various shortcomings that make the current technology still far from satisfying the demands of 6G networks. The first limitation is the short communication range resulting from the short wavelengths of the visible light waves. This results in high propagation losses as the VLC channel gain significantly deteriorates when the distance between transmitting and receiving devices increases, in addition to the fact that the visible light can be easily blocked by obstacles \cite{zeng2019angle,soltani2019bidirectional,arfaoui2020measurements}. Moreover, unlike conventional RF wireless systems, the VLC channel is not isotropic, meaning that the orientations of the transmitting and receiving devices affect the channel gains significantly \cite{zeng2019angle,soltani2019bidirectional,arfaoui2020measurements}. As a result, the VLC channel quality fluctuates and the performance of advanced multiple access techniques, such as NOMA, is significantly affected when applied to VLC systems \cite{zeng2019angle,soltani2019bidirectional,arfaoui2020measurements}. 
\subsection{Related Works}
A large body of work has been produced in the application of NOMA in VLC systems, such as \cite{kizilirmak2015non,yin2016performance,yapici2019noma,naser2020rate,obeed2020user} to name a few. In \cite{kizilirmak2015non,yin2016performance}, it was shown that NOMA outperforms the traditional orthogonal frequency division multiple access (OFDMA) scheme in indoor VLC systems that are serving multiple users with fixed positions and orientations. Both works prove the superiority of NOMA over OMA for stationary users but refrain from studying the performance of NOMA for mobile users with random positions and orientations. In \cite{yapici2019noma}, NOMA was investigated in a downlink multi-user VLC system with mobile and randomly oriented users, where the sum-rate and the outage probability were derived. Recently, an overview of the key multiple access techniques used in single-cell VLC systems, such as NOMA, space-division-multiple-access (SDMA), and rate splitting multiple access (RSMA), was provided in \cite{naser2020rate}. In \cite{obeed2020user}, a system model consisting of one AP serving multiple users simultaneously was established, where users are served using the cooperative NOMA (C-NOMA) scheme.  C-NOMA is an enhanced version of NOMA that takes advantage of the desirable attributes of NOMA and device-to-device (D2D) communication. By exploiting the SIC capabilities, each strong user can act as a relay to assist the communication between the transmitter and the weak user through an RF D2D link. Hence, each weak user receives multiple versions of his data, one through the VLC link coming from the transmitter and the remaining from the RF links coming from the associated strong users. In \cite{obeed2020user}, the VLC system sum-rate was maximized by optimizing the strong/weak user pairing, the VLC/RF link selection, and the transmitted power, where it was shown that C-NOMA outperforms the conventional NOMA scheme. The main drawback with the works in \cite{kizilirmak2015non,yin2016performance,yapici2019noma,naser2020rate,obeed2020user} is that the performance of NOMA was investigated in a single-cell setup, and the extension to multi-cell configuration was not considered\\
\indent For 5G wireless networks and beyond, cell densification has been demonstrated to be an effective method to increase the network capacity. The main motivation behind cell densification is reducing the path loss and allowing the reuse of partial or total spectrum by small cells within a given coverage area. In the context of VLC, the concept of optical attocell was first introduced in \cite{haas2013high}. However, similar to the cell densification concept, the main drawback of the use of multiple optical attocells in indoor environments is the severe inter-cell interference (ICI). Precisely, an indoor environment can be composed of multiple optical attocells, each having a radius of around $3$ m \cite{yin2016performance}. These optical attocells are adjacent to each other. Therefore, when the optical attocells are exploiting the same frequency resources, the users within one attocell can experience severe ICI from adjacent cells. \\ 
\indent The application of NOMA in multicell VLC systems was studied in \cite{zhang2016user,obeed2020power}. In \cite{zhang2016user}, a user grouping scheme based on users locations was proposed to reduce the ICI effects in NOMA-based multi-cell VLC networks. With the residual interference from the SIC process in NOMA taken into account, the power allocation within each attocell was optimized to improve the achievable rate per user under a quality-of-service (QoS) constraint. Recently, a multi-cell VLC system was considered in \cite{obeed2020power}, where each attocell consists of an AP that serves two users coexisting within its coverage using C-NOMA. However, the main drawback of the schemes proposed in \cite{zhang2016user,obeed2020power} is that the VLC users still suffer from the ICI effects since no ICI mitigation techniques were employed. In order to mitigate the ICI effects, the concept of cooperative cellular systems has been introduced in practical VLC systems, where multiple VLC APs coordinate together in serving multiple users within the resulting illuminated area \cite{yang2020coordinated,chen2013joint,ma2015coordinated,ma2013integration,yang2018joint,pham2017multi_1}. In this context, the users that are highly prone to ICI effects can be jointly served by a set of adjacent APs. This ICI mitigation technique is referred to as the coordinated multipoint (CoMP) or the coordinated broadcasting technique \cite{ali2018downlink,yang2020coordinated,chen2013joint,ma2015coordinated,ma2013integration,yang2018joint,pham2017multi_1}. \\ 
\indent The performance of CoMP transmission in multi-cell downlink RF networks has been investigated in \cite{ali2018downlink,elhattab2020comp,elhattab2020joint,elhattab2020power,elhattab2022ris,elhattab2022joint}, where different system configurations and various performance metrics were considered. However, different from RF systems, a constraint is imposed on the amplitude of the transmitted VLC signals, which is referred to as the peak-power constraint, rather than on their average power. Due to this, the power allocation schemes developed for multi-user multi-cell RF systems do not apply for VLC systems, which is the case in \cite{obeed2019optimizing,obeed2020power} to name a few. On the other hand, the performance of CoMP transmission in VLC systems was also investigated in the literature \cite{yang2020coordinated,chen2013joint,ma2015coordinated,ma2013integration,yang2018joint,pham2017multi_1}. The authors in \cite{yang2020coordinated} considered the problem of joint resource and power allocation in OFDMA-based coordinated multi-cell network. OFDMA is an OMA technique and it was demonstrated in the literature that NOMA techniques offer a higher spectral efficiency and a greater connectivity when compared to OMA techniques. In \cite{chen2013joint}, it was demonstrated that the CoMP technique can achieve higher signal-to-interference-plus-noise ratios (SINRs) for ICI prone users in comparison to the frequency reuse (FR) technique. In \cite{ma2015coordinated,ma2013integration}, two linear precoders based on the minimum mean square error (MMSE) method were proposed to minimize the mean-square error (MSE) in multiple coordinated VLC attocells under imperfect channel state information (CSI), where all the LED transmitters are assumed to be coordinated through an optical power line communication (PLC) link. Considering multi-user multi-cell multiple-input multiple-output (MIMO) VLC systems, a coordinated zero-forcing (ZF) precoding technique was proposed in \cite{yang2018joint,pham2017multi_1} in a way to cancel the ICI, where the objective was to minimize the MSE in \cite{yang2018joint} and to maximize the achievable sum-rate of the cellular users in \cite{pham2017multi_1}. \\
\indent The integration between CoMP and NOMA was investigated in multi-user multi-cell VLC systems \cite{rajput2019joint,eltokhey2020hybrid}. In \cite{rajput2019joint}, a downlink multi-user multi-cell VLC system was considered and a joint NOMA transmission scheme was proposed, where the users in the overlapping regions are jointly served by all the corresponding VLC APs. In this context, the authors developed two subcarrier allocation techniques, namely, area-based subcarrier allocation and user-based subcarrier allocation. However, the power allocation scheme, which is a crucial factor in the considered system, was not optimized, and suboptimal power allocation coefficients were employed instead. In \cite{eltokhey2020hybrid}, the authors proposed a hybrid NOMA and ZF precoding technique to manage multiple users in multi-cell VLC networks. The proposed approach consists of employing ZF precoding to cancel the ICI at the cell-edge users, while using NOMA to deal with the IUI. However, the authors have ignored the effects of ICI at the cell-center users, which may affect significantly the performance of the system. In fact, in downlink multi-cell VLC systems, the cell-center users may experience ICI as well since typical users may have random orientation and the distance between the APs is typically small, i.e., the corresponding adjacent cells are very close to each other. \\
\indent Based on the above background, one can see that both NOMA and C-NOMA are auspicious multiple access techniques that can boost the network connectivity, while the CoMP technique is an effective ICI mitigation technique. Motivated by this, it is expected that the integration between CoMP and NOMA/C-NOMA techniques can improve the performance of multi-user multi-cell VLC systems and can make the VLC technology a promising candidate for 6G wireless networks. However, against the above background, the optimal power allocation schemes that enhance typical performance metric, such as the sum data rate and the minimum data rate, in multi-user multi-cell VLC systems under the promising CoMP-assisted NOMA and CoMP-assisted C-NOMA techniques were not investigated in the literature, which is the focus of the paper.
\subsection{Contributions and Outcomes}
\indent In this paper, we consider a downlink VLC system consisting of two adjacent VLC attocells that utilize the same frequency resources. Each attocell contains one AP used for illumination and data communication simultaneously. The coverage areas of the two attocells are overlapped. Within this system, three stationary VLC users are communicating simultaneously with the two APs, where each user is equipped with a randomly oriented user-equipment (UE). The first and second UEs are located near the center of each attocell, whereas the third UE is located near the edge of the two attocells, i.e., in the area of overlapping coverage. Therefore, the first and second UEs are associated to the first and second APs, respectively, whereas the third UE is associated to both APs. In this setup, two multiple access techniques are proposed, which are defined as follows.
\begin{itemize}
    \item \textbf{CoMP-assisted NOMA}: Each AP employs NOMA to serve its associated UEs. Hence, the first and second UEs are the strong NOMA UEs in their respective cells, whereas the third UE is the weak UE in both cells. Since the two attocells utilize the same frequency resources, the UEs suffer from ICI. To overcome this issue, the joint-transmission (JT) CoMP technique is employed to mitigate the ICI at the weak UE, whereas the coordinated ZF precoding technique is used to cancel the ICI at the strong UEs.
    \item \textbf{CoMP assisted C-NOMA}: Each AP employs C-NOMA to serve its associated UEs. Hence, the first and second UEs are the strong NOMA UEs in their respective cells, whereas the third UE is the weak UE in both cells. Moreover, the strong UEs have the ability to harvest the energy from the light intensity broadcast from the APs. Therefore, by harnessing the SIC capabilities, the strong UEs can work as relays and forward the decoded weak UE’s signal through RF D2D links using the harvested energy. Furthermore, similar to the CoMP-assisted NOMA scheme, the coordinated ZF precoding technique is used at the two APs to cancel the ICI at the strong UEs.
\end{itemize}

\indent For each proposed multiples access scheme, two optimization frameworks are considered. The first aims to maximize the network sum data rate, under QoS, SIC and power constraints, whereas the second aims to maximize the minimum achievable data rate within the multi-cell VLC network. For each optimization problem, an optimal and low-complexity power control policy is proposed. In the simulation results, the optimality of the proposed low-complexity power control schemes are verified. In addition, three baseline schemes are considered for comparison purposes, namely, the CoMP-assisted OMA scheme, the C-NOMA scheme and the NOMA scheme. The results and benchmarking illustrate the superiority of the proposed schemes in this paper. Finally, the performance of the proposed schemes and the considered baselines are evaluated while varying various system parameters.
\subsection{Outline and Notations}
\indent The rest of this paper is organized as follows. Section II introduces the system model and the proposed multiple access schemes. Section III presents the problems formulations and the proposed solutions. Section IV and V present the simulation results and the conclusion, respectively.\\
\indent Upper case bold characters denote matrices, whereas lower case bold characters denote vectors. $\left(\cdot\right)^T$ and $\left(\cdot\right)^{\perp}$ denote the transpose and the pseudo-inverse operators, respectively. $\left| \left| \cdot \right| \right|_{\infty}$ denotes the infinity norm operator. For every positive real number $a$, the function ${\rm rect}\left( \frac{\cdot}{a} \right)$ denotes the rectangular function within $[0,a]$. For $N \in \mathbb{N}$ and for every real numbers $\left\{a_1, a_2, \dots, a_N\right\}$, $\min \left(a_1, a_2, \dots, a_N \right)$ and $\max \left(a_1, a_2, \dots, a_N \right)$ denote the minimum and the maximum among $\left\{a_1, a_2, \dots, a_N\right\}$, respectively. Finally, for $N \in \mathbb{N}$, $\textbf{0}_N$ and $\textbf{1}_N$ denote the all zeros and all ones $N \times 1$ vectors, respectively.
\section{System Model and Transmission Schemes} 
\subsection{System Model}
\label{subsec:system_model}
\begin{figure}[t]
\centering     
\includegraphics[width=1\linewidth]{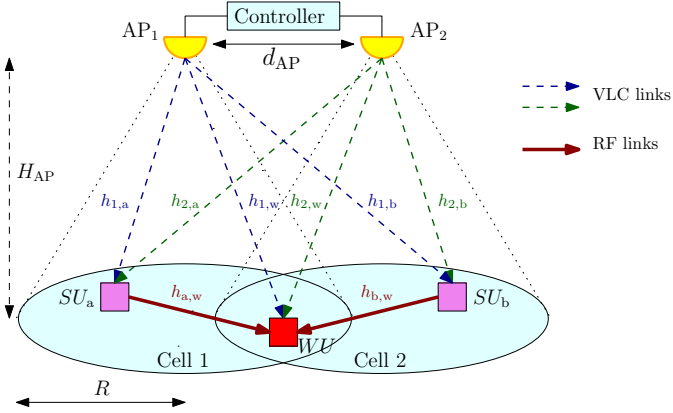}
\caption{System Model.}
\label{fig:IndEnv}
\end{figure}
\indent The system model considered in this paper is shown in Fig.~\ref{fig:IndEnv}, where two APs, each equipped with a set of LEDs, are installed at the ceiling of an indoor environment at a height $H_{\rm AP}$ from the ground.\footnote{In practical VLC systems, the number of deployed APs might be greater than two. In such a case, the APs can be clustered into pairs of adjacent APs and the proposed multiple access schemes can be applied within each pair. In addition, the main reason behind limiting the number of APs to two in this study is that this paper proposes a proof of concept of the CoMP-assisted NOMA and the CoMP-assisted C-NOMA schemes for indoor VLC systems.} Hence, the circular coverage area of each AP has a radius $R = H_{\rm AP} \times \tan \left( \Phi_{1/2} \right)$, where $\Phi_{1/2}$ represents the half-power semi-angle of the LEDs \cite{arfaoui2019snr}. The two APs are jointly monitored by a VLC controller and they share the same frequency bandwidth $B_{\rm v}$. Two UEs, denoted by $SU_{\rm a}$ and $SU_{\rm b}$, respectively, are located around the centers of the first and second cells, respectively, i.e., around the centers of the coverage areas of AP$_1$ and AP$_2$, respectively. On the other hand, since the two APs are adjacent, their resulting coverage areas may be overlapping. In this context, one UE, denoted by $WU$, is located within the edge of both cells, i.e., within the overlapping region between the coverage areas of AP$_1$ and AP$_2$. The area of the resulting overlapping region depends on the radius $R$ of each cell, and hence, depends on the height $H_{\rm AP}$, the half-power semi-angle $\Phi_{1/2}$, and the distance between the two APs, which is denoted by $d_{\rm AP}$. In practical use cases, wide angle LEDs are required in order to provide uniform illumination within the indoor environments. Hence, typical values of the LEDs half-power semi-angle $\Phi_{1/2}$ can go up to $60^\circ$ in practical use cases \cite{LEDhalfpower}. On the other hand, the separation distance between the different APs in practical scenarios depends mainly on the dimensions of the indoor environments as well as the required number of APs that provide the target uniform illuminations. In this context, typical values of the distance between the APs can range from $1$m to $5$m in practical use cases \cite{alfattani2021review}.
\subsection{Transmission Model}
\indent Since they are located around the centers of their associated cells, the UEs $SU_{\rm a}$ and $SU_{\rm b}$ are associated with AP$_1$ and AP$_2$, respectively. However, the UE $WU$ is associated with both APs since it is located within the intersection of their coverage areas. In this considered cellular system, the VLC controller applies NOMA to serve the UEs within each cell, where $SU_{\rm a}$ and $WU$ are the NOMA UEs associated with AP$_1$ and $SU_{\rm b}$ and $WU$ are the NOMA UEs associated with AP$_2$. In this context, when applying the NOMA principle, $SU_{\rm a}$ and $SU_{\rm b}$ are considered as strong UEs in their respective cells since they are located around their centers, whereas $WU$ is considered as the weak UE in both cells. Based on this, the superimposed signals of $SU_{\rm a}$ and $WU$ at AP$_1$ and of $SU_{\rm b}$ and $WU$ at AP$_2$ are expressed, respectively, as
\begin{subequations}
\begin{align}
    s_1 &=  \sqrt{(1-\alpha_1) P_{\rm elec}} s_{\rm a} +  \sqrt{\alpha_1P_{\rm elec}} s_{\rm w}, \\
    s_2 &=  \sqrt{(1-\alpha_2) P_{\rm elec}} s_{\rm b} +  \sqrt{\alpha_2 P_{\rm elec}} s_{\rm w}, 
\end{align}
\end{subequations}
where $s_{\rm a}$, $s_{\rm b}$, and $s_{\rm w}$ represent the messages intended to $SU_{\rm a}$, $SU_{\rm b}$, and $WU$, respectively, such that for all $k \in \left\{\rm a,b,w \right\}$, the message $s_{k} \in \left[ -1, 1\right]$, $P_{\rm elec}$ represents the total electrical power, and $\alpha_1$, $\alpha_2 \in [0,1]$ represent the power allocation factors assigned by AP$_1$ and AP$_2$ to $WU$, respectively, which should be designed by the VLC controller. Afterwards, the VLC controller applies linear precoding to the signals broadcast by AP$_1$ and AP$_2$. As such, the $2\times 1$ vector of optical signals broadcast from the two APs is given by
\begin{equation}
    \label{eq:signal}
    \mathbf{x} = \eta \left(\mathbf{W} \mathbf{s} + I_{\rm DC} \mathbf{1}_2\right),
\end{equation}
where $\mathbf{x} = \left[x_1, x_2 \right]^T$, such that $x_1$ and $x_2$ are the total optical signals broadcast from AP$_1$ and AP$_2$, respectively, $\eta$ [W/A] is the current-to-power conversion efficiency of the LEDs, $\mathbf{W}$ is the $2 \times 2$ precoding matrix of the considered VLC cellular system that should be designed by the VLC controller, $\mathbf{s} = \left[s_1, s_2\right]^T$, and $I_{\rm DC}$ represents the electrical direct-current (DC) provided to each AP. The constant term $I_{\rm DC} \mathbf{1}_2$ is added in \eqref{eq:signal} in order to ensure the positivity of the transmitted signals $\mathbf{W} \mathbf{s}$ at the input of the LEDs. \\
\indent One operating constraint in VLC systems is the peak-power constraint at the LEDs, also known as the amplitude constraint, \cite{arfaoui2018secrecy,arfaoui2018artificial,arfaoui2019secrecy}. In fact, typical LEDs suffer from nonlinear distortion and clipping effects. Hence, in order to maintain a linear current to light conversion and to avoid clipping distortion, a peak-power constraint is imposed on the emitted optical power from the APs \cite{arfaoui2020physical}. This constraint is expressed as
\begin{equation}
    \label{eq:amplitude_constraint}
    ||\mathbf{W} \mathbf{s}||_{\infty}\leq \nu I_{\rm DC},
\end{equation}
where $\nu \in [0,1]$ denotes the modulation index of the VLC system \cite{arfaoui2020physical,mostafa2015physical}. Now, in order to satisfy the constraint in \eqref{eq:amplitude_constraint}, we impose the constraints $||\mathbf{W}||_{\infty} \leq 1$ and $P_{\rm elec} \leq \frac{\left(\nu I_{\rm DC}\right)^2}{2}$ on the precoding matrix $\mathbf{W}$ and the total electrical power $P_{\rm elec}$. In this case, the constraint in \eqref{eq:amplitude_constraint} is satisfied as explained in Appendix \ref{Appendix:A}. 
\subsection{Received Signals}
For all $i \in {1,2}$, let $h_{i,\rm a}$, $h_{i,\rm b}$ denote the positive-valued downlink VLC channel gains from AP$_i$ to $SU_{\rm a}$ and $SU_{\rm b}$, respectively, where the expression of each channel gain can be found in Appendix \ref{Appendix:B}. Therefore, the received signals at $SU_{\rm a}$, $SU_{\rm b}$ can be expressed in a matrix form as \cite{arfaoui2020physical,mostafa2015physical}
\begin{equation}
    \mathbf{y} = R_{\rm p}\eta \mathbf{H}_{\rm a,b}\mathbf{W} \mathbf{s} + R_{\rm p}\eta I_{\rm DC} \mathbf{H} \mathbf{1}_2 + \mathbf{n},
\end{equation}
where $\mathbf{y} = [y_{\rm a},y_{\rm b}]^T$ is the vector of the received signals, such that $y_{\rm a}$ and $y_{\rm b}$ are the received signals at $SU_{\rm a}$ and $SU_{\rm b}$, respectively, $R_{\rm p}$ [V/W] is the responsivity of the PDs of the UEs, and $\mathbf{H}_{\rm a,b}$ denotes the channel matrix between the two APs and the strong UEs, which is expressed as
\begin{equation}
\label{eq:channel_matrix_cell-center}
\mathbf{H}_{\rm a,b} = 
    \begin{bmatrix}
    h_{1,\rm a} & h_{2,\rm a} \\ 
    h_{1,\rm b} & h_{2,\rm b}
    \end{bmatrix}, 
\end{equation}
and $\mathbf{n} = [n_{\rm a},n_{\rm b}]^T$, in which $n_{\rm a}$ and $n_{\rm b}$ represent the additive white Gaussian noise (AWGN) experienced at $SU_{\rm a}$ and $SU_{\rm b}$, respectively. In addition, for all $k \in \left\{\rm a,b \right\}$, the noise $n_k$ is $\mathcal{N}(0,\sigma_{\rm v}^2)$ distributed, where $\sigma_{\rm v}^2 = N_{\rm v} B_{\rm v}$ is the noise power, in which $N_{\rm v}$ is the noise power spectral density. \\
\indent One key parameter in the considered cellular system is the design of the precoding matrix $\mathbf{W}$ at the VLC controller in a way that boosts the overall performance of the system. As discussed above, and as it can be seen from Fig.~\ref{fig:IndEnv}, the considered cellular system suffers from ICI. In fact, since both APs are exploiting the same frequency bandwidth, the strong UE at each cell suffers from ICI that is broadcast from the other cell. Precisely, $SU_{\rm a}$, which is associated to cell 1 and served by AP$_1$, is experiencing ICI generated from AP$_2$ through the wireless channel $h_{2, \rm a}$, and $SU_{\rm b}$, which is associated to cell 2 and served by AP$_2$, is experiencing ICI generated from AP$_1$ through the wireless channel $h_{1, \rm b}$. One way to overcome this issue is to cancel the ICI realizations at both strong UEs through ZF precoding. As such, by taking into account the imposed amplitude constraint $||\mathbf{W}||_{\infty} \leq 1$, the ZF precoding matrix for the considered cellular system can be expressed as $\mathbf{W} = \frac{\mathbf{H}_{\rm a,b}^{\perp}}{\left|\left|\mathbf{H}_{\rm a,b}^{\perp}\right|\right|_{\infty}}$ \cite{pham2017multi_2}. Based on this, the received signals at $SU_{\rm a}$ and $SU_{\rm b}$ are expressed, respectively, as
\begin{subequations} 
\begin{align}
    y_{\rm a} &= \frac{R_{\rm p}\eta \sqrt{(1-\alpha_1) P_{\rm elec}}}{\left|\left|\mathbf{H}_{\rm a,b}^{\perp}\right|\right|_{\infty}} s_{\rm a} + \frac{R_{\rm p}\eta\sqrt{\alpha_1P_{\rm elec}}}{\left|\left|\mathbf{H}_{\rm a,b}^{\perp}\right|\right|_{\infty}} s_{\rm w} \nonumber  \\
    &\,\,+ R_{\rm p}\eta I_{\rm DC}\left(h_{\rm 1,a} + h_{\rm 2,a} \right) + n_{\rm a}, \label{eq:received_signal_cell_center_a} \\
    y_{\rm b} &= \frac{R_{\rm p}\eta \sqrt{(1-\alpha_2) P_{\rm elec}}}{\left|\left|\mathbf{H}_{\rm a,b}^{\perp}\right|\right|_{\infty}} s_{\rm b} + \frac{R_{\rm p}\eta\sqrt{\alpha_2P_{\rm elec}}}{\left|\left|\mathbf{H}_{\rm a,b}^{\perp}\right|\right|_{\infty}} s_{\rm w} \nonumber \\
    &\,\, + R_{\rm p}\eta I_{\rm DC}\left(h_{\rm 1,b} + h_{\rm 2,b} \right) + n_{\rm b} \label{eq:received_signal_cell_center_b}. 
\end{align} 
\end{subequations}
Considering the weak UE, let $h_{i,\rm w}$, for all $i \in \{1,2\}$, denotes the positive-valued downlink VLC channel gains from AP$_i$ to $WU$, where its expression can be found in Appendix \ref{Appendix:B}. In addition, let $\mathbf{h}_{\rm w} = \left[h_{\rm 1, w}, h_{\rm 2,w} \right]^T$ and $\Tilde{\mathbf{h}}_{\rm w} = \left[\Tilde{h}_{\rm 1, w}, \Tilde{h}_{\rm 2,w} \right]^T = \mathbf{W}^T \mathbf{h}_{\rm w}$. Based on this, the received signal at $WU$ through the direct transmission from the two APs is expressed as
\begin{equation} 
\label{eq:received_signal_cell_edge}
\begin{split}
    y_{\rm w} &= R_{\rm p}\eta \mathbf{h}_{\rm w}^T\mathbf{W} \mathbf{s} + R_{\rm p}\eta I_{\rm DC} \mathbf{h}_{\rm w}^T \mathbf{1}_2 + n_{\rm w}, \\
    &=\frac{R_{\rm p}\eta \sqrt{(1-\alpha_1) P_{\rm elec}}}{\left|\left|\mathbf{H}_{\rm a,b}^{\perp}\right|\right|_{\infty}} \Tilde{h}_{\rm 1,w} s_{\rm a} + \frac{R_{\rm p}\eta\sqrt{(1-\alpha_2)P_{\rm elec}}}{\left|\left|\mathbf{H}_{\rm a,b}^{\perp}\right|\right|_{\infty}} \Tilde{h}_{\rm 2,w} s_{\rm b} \\ 
    &\quad+ \frac{R_{\rm p}\eta\sqrt{P_{\rm elec}}}{\left|\left|\mathbf{H}_{\rm a,b}^{\perp}\right|\right|_{\infty}} \left(\Tilde{h}_{\rm 1,w}\sqrt{\alpha_1} + \Tilde{h}_{\rm  2,w}\sqrt{\alpha_2}\right)s_{\rm w} \\
    &\quad + R_{\rm p}\eta I_{\rm DC}\left(h_{\rm 1,w} + h_{\rm 2,w} \right) +  n_{\rm w},
\end{split} 
\end{equation}
where $n_{\rm w}$ represents the AWGN experienced at $WU$ that is $\mathcal{N}(0,\sigma_{\rm v}^2)$ distributed.
\subsection{Data Rate Analysis}
\subsubsection{CoMP-Assisted NOMA} 
Since VLC systems impose an amplitude constraint (or a peak-power constraint) on the input signals, the Gaussian distribution is not admissible for VLC signals. Due to this, the capacity of VLC systems remains unknown \cite{lapidoth2009capacity}. Extensive studies have put much effort into obtaining closed-form expressions for the achievable data rate, and have demonstrated that it can be expressed as $\frac{1}{2} \log \left(1 + c \times \text{SNR} \right)$, where $c = \frac{1}{2 \pi e}$ and $e$ is the Euler constant, when the visible light signals follow the truncated Gaussian distribution \cite{chaaban2016capacity,zhou2019bounds}. On the other hand, the data rates received at the UEs are governed by the NOMA technique employed by the APs. Accordingly, $SU_{\rm a}$ and $SU_{\rm b}$ will first use SIC to decode the message of $WU$ and then decode their own messages. Based on this, and according to \cite{chaaban2016capacity}, the achievable data rate of $SU_{\rm a}$ to decode the signal of $WU$ is expressed as
\begin{equation}
    \label{eq:a_w_rate}
     R_{\rm a \rightarrow w}(\alpha_1) = \frac{B_{\rm v}}{2} \log \left( 1 + \frac{c \alpha_1}{c \left(1-\alpha_1\right) + \frac{1}{\gamma_{RX}}} \right),
\end{equation}
where $\gamma_{Rx} = \frac{R_{\rm p}^2\eta^2P_{\rm elec}\sigma_{\rm s}^2}{\left|\left|\mathbf{H}_{\rm a,b}^{\perp}\right|\right|_{\infty}^2 \sigma_{\rm v}^2}$, in which $\sigma_{\rm s}^2 = \sigma_{\rm d}^2 - \frac{\sigma_{\rm d} \exp \left(\frac{-1}{2 \sigma_{\rm d}^2} \right)}{\erf \left( \frac{1}{\sigma_{\rm d} \sqrt{2}} \right)}$, such that $\sigma_{\rm d}^2$ is the scale parameter of the truncated Gaussian distribution of the input signals $s_{\rm a}$, $s_{\rm b}$ and $s_{\rm w}$, respectively. After that $SU_{\rm a}$ performs SIC and cancels the message of $WU$ from its reception, its achievable data rate to decode its own message is expressed as
\begin{equation}
\label{eq:a_rate}
    R_{\rm a}(\alpha_1) = \frac{B_{\rm v}}{2} \log \left( 1 + c \gamma_{RX} \left(1-\alpha_1\right)\right).
\end{equation}
Similarly, the achievable data rate of $SU_{\rm b}$ to decode the signal of $WU$ is expressed as
\begin{equation}
\label{eq:b_w_rate}
     R_{\rm b \rightarrow w}(\alpha_2) = \frac{B_{\rm v}}{2} \log \left( 1 + \frac{c \alpha_2}{c \left(1-\alpha_2\right) + \frac{1}{\gamma_{RX}}} \right),
\end{equation}
and its achievable data rate to decode its own message after performing SIC is expressed as
\begin{equation}
    \label{eq:b_rate}
    R_{\rm b}(\alpha_2) = \frac{B_{\rm v}}{2} \log \left(1 + c \gamma_{RX} \left( 1-\alpha_2\right) \right).
\end{equation}
\indent For $WU$, and following the NOMA principle, it will treat the messages of $SU_{\rm a}$ and $SU_{\rm b}$ as noise and will decode directly its own message. Hence, the achievable data rate of $WU$ to decode its own message is expressed as shown in \eqref{rate_w} on top of next page.
\begin{figure*}[t]
\begin{equation}
    \label{rate_w}
    \begin{split}
        &R_{\rm w \rightarrow w}^{\rm VL}(\alpha_1,\alpha_2) = \frac{B_{\rm v}}{2} \log \left(1 + \frac{c(\Tilde{h}_{\rm 1,w}\sqrt{\alpha_1} + \Tilde{h}_{\rm 2,w}\sqrt{\alpha_2})^2}{c\Tilde{h}_{\rm 1,w}^2(1-\alpha_1) + c\Tilde{h}_{\rm 2,w}^2(1-\alpha_2) + \frac{1}{\gamma_{\rm RX}}} \right).
    \end{split}
\end{equation}
\noindent\makebox[\linewidth]{\rule{\textwidth}{0.4pt}}
\end{figure*}
Finally, to ensure a successful SIC at the strong UEs, the $WU$’s message should be detectable at each strong UE \cite{jiao2020max,jiao2021max,timotheou2015fairness,hanif2015minorization,zhang2016robust}. Thus, the achievable data rate of $WU$ is given by \cite{jiao2020max,jiao2021max,timotheou2015fairness,hanif2015minorization,zhang2016robust}
\begin{equation}
    \label{rate_w_combined}
    R_{\rm w}^{\rm VL} = \min \left(R_{\rm a \rightarrow w}(\alpha_1), R_{\rm w \rightarrow w}^{\rm VL}(\alpha_1,\alpha_2),R_{\rm b \rightarrow w}(\alpha_2)\right).
\end{equation}
\subsubsection{CoMP-Assisted C-NOMA} In this scheme, and similar to the CoMP-assisted NOMA scheme, each strong UE performs SIC and cancels the message of the weak UE from its reception. In this context, the achievable data rates of $SU_{\rm a}$ and $SU_{\rm b}$ to decode the signal of $WU$ and to decode their own signals are given in \eqref{eq:a_w_rate}-\eqref{eq:b_rate}. However, different from the CoMP-assisted NOMA scheme, it is assumed that the strong NOMA UEs $SU_{\rm a}$ and $SU_{\rm b}$ can work as relays that have the ability to harvest the energy from the light intensity broadcast from the APs and then to utilize it to forward the decoded weak UE’s signal. To harvest the energy, a capacitor separates the DC component from the received electrical signal at each strong UE and forwards it to its energy harvesting circuit \cite{wang2015design,diamantoulakis2018simultaneous}. The received energy at the the strong user $k \in \left\{ \rm a, b \right\}$ is given by \cite{li2011solar} 
\begin{equation}
    E_{k} = f V_t I_{{\rm DC},k}^{\rm r} \log \left[1 + \frac{I_{{\rm DC},k}^{\rm r}}{I_0} \right],
\end{equation}
where $I_{{\rm DC},k}^{\rm r} = R_{\rm p}\eta I_{\rm DC}\left(h_{1,k} + h_{2,k} \right)$ is the received DC at the strong user $k$, $V_t$ is the thermal voltage, $f$ is the fill factor, and $I_0$ is the dark saturation current of the PD. At this stage, each transmission time-slot is divided into two equidistant mini-time slots. In the first time slot, each strong UE harvest the energy and charge its battery, whereas in the second mini-time slot, it discharge the harvested energy and transmit the data of $WU$ through an RF link. In this case, for all $k \in \left\{ \rm a, b \right\}$, the RF transmission power from $SU_{k}$ is given by $P_k^{\rm RF} = E_{k}$ and the achievable data rate of $WU$ that can be offered by the strong UEs $SU_{\rm a}$ and $SU_{\rm b}$ through the RF D2D links is given by 
\begin{equation}
   R_{\rm a,b \rightarrow w}^{\rm RF} = \frac{B_{\rm r}}{2} \log \left(1 + \frac{|g_{\rm a,w}|^2 P_{\rm a}^{\rm RF} + |g_{\rm b,w}|^2 P_{\rm b}^{\rm RF}}{\sigma_{\rm r}^2} \right),
\end{equation}
where, for all $k \in \left\{ \rm a, b \right\}$, $g_{k,\rm w}$ is the RF channel coefficient between the strong UE $SU_k$ and $WU$ and $\sigma_{\rm r}^2 = N_{\rm r} B_{\rm r}$ is the RF noise power, in which $N_{\rm r}$ is the power spectral density of the RF noise and $B_{\rm r}$ is the RF modulation bandwidth. Using the same RF D2D channel model in indoor environments adopted in \cite{xiao2019hybrid}, the RF channel coefficients between the strong UEs and the weak UE are modeled by
the Nakagami-$m$ fading channel, i.e., for all $k \in \{\rm a,b\}$, the RF channel coefficient $g_{k,\rm w} \sim Nakagami\left(F, d_{k,w}^{-\mu}\right)$, where $F$ represents the fading parameter, $d_{k,w}$ represents the Euclidean distance between the strong UE $SU_{k}$ and the weak UE $WU$, and $\mu$ represents the path-loss exponent. \\ 
\indent The weak UE $WU$ is receiving two copies of his message through two different links, one from $SU_{\rm a}$ and one from $SU_{\rm b}$. Hence, $WU$ can combine these two copies using the maximum-ratio-combining (MRC) technique and then decodes his own message, which results in the achievable data rate $R_{\rm a,b \rightarrow w}^{\rm RF}$. However, this data rate is achievable if and only if the copies transmitted from the strong UEs are truly the exact message $s_{\rm w}$ of $WU$, i.e., if and only if the strong UEs are able to decode the message $s_{\rm w}$ of $WU$. Therefore, the data rate achieved at the weak UE is constrained by the achievable data rates of the strong UEs to decode the message $s_{\rm w}$ of the $WU$, which are the data rates $R_{\rm a \rightarrow w}(\alpha_1)$ and $R_{\rm b \rightarrow w}(\alpha_2)$. Consequently, in line with the results of \cite{elhattab2020joint,elhattab2022joint}, the resulting achievable data rate of the weak UE $WU$ from the cooperative diversity of the strong UEs is expressed as
\begin{equation}
    \label{eq:rate_weak_CNOMA}
    R_{\rm w}^{\rm RF}(\alpha_1,\alpha_2) = \min \left(R_{\rm a \rightarrow w}(\alpha_1), R_{\rm a,b \rightarrow w}^{\rm RF}, R_{\rm b \rightarrow w}(\alpha_2)\right).
\end{equation}
\section{Problems Formulation and Proposed Solutions}
In this section, we investigate the sum data rate and the minimum data rate maximization problems for the considered VLC cellular system for both the CoMP-assisted NOMA and the CoMP-assisted C-NOMA schemes. The sum data rate maximization problem is investigated in subsection \ref{Sec:Sum_Rate}, whereas the minimum data rate maximization problem is considered in subsection \ref{Sec:Min_Rate}.
\subsection{Sum data rate Maximization}
\label{Sec:Sum_Rate}
The first objective is to maximize the sum data rate of the considered VLC cellular system under both the CoMP-assisted NOMA and the CoMP-assisted C-NOMA schemes, while a target QoS should be guaranteed for each UE in terms of its required data rate threshold, denoted by $R_{\rm th}$.
\subsubsection{CoMP-Assisted NOMA} For this scheme, the preset objective can be reached by solving the following optimization problem.
\allowdisplaybreaks
\begingroup
\begin{subequations}
\label{OPT_1}
\begin{align}
\mathcal{P}_1:\,\, &R_{\rm sum}^{{\rm VL}^*}=\max_{\alpha_1, \alpha_2}\,\, R_{\rm a}(\alpha_1)+R_{\rm b}(\alpha_2)+R_{\rm w}^{\rm VL}(\alpha_1,\alpha_2),\label{Prob1} \\
&\text{s.t.}\qquad \,\,\,\,  0 \leq \alpha_i \leq 1, \quad \forall\,\, i \in \{1,2\}, \label{Const:C1}\\
&\qquad \qquad  R_{\rm a \rightarrow w}(\alpha_1) \geq R_{\rm th}, \label{Const:C3}\\
&\qquad \qquad  R_{\rm a}(\alpha_1) \geq R_{\rm th}, \label{Const:C4}\\
&\qquad \qquad  R_{\rm b \rightarrow w}(\alpha_2) \geq R_{\rm th}, \label{Const:C5}\\
&\qquad \qquad  R_{\rm b}(\alpha_2) \geq R_{\rm th}, \label{Const:C6}\\
&\qquad \qquad  R_{\rm w \rightarrow w}^{\rm VL} \geq R_{\rm th}. \label{Const:C7}
\end{align}
\end{subequations}
\endgroup
Based on the data rate expressions presented in \eqref{eq:a_w_rate}-\eqref{rate_w}, problem $\mathcal{P}_1$ is a non-linear non-convex problem that cannot be solved in a straightforward manner. Alternatively, we propose in the following an efficient and low complexity approach to solve problem $\mathcal{P}_1$. First, in order to be able to solve problem $\mathcal{P}_1$, the conditions under which at least one feasible solution exists must be derived. In this context, the feasibility conditions of problem $\mathcal{P}_1$ are presented in the following theorem.
\begin{theorem}
Problem $\mathcal{P}_1$ is feasible if and only if the following conditions hold: 
\begin{subequations}
\label{eq:cdts_comp_noma}
\begin{align}
    &\text{Condition 1:} \,\, \alpha_{\min} \leq \alpha_{\max}, \label{eq:cdt11}\\
    &\text{Condition 2:} \,\,  0 \leq g\left(\alpha_{\max},\alpha_{\max}\right), \label{eq:cdt12}
\end{align}
\end{subequations}
where $\alpha_{\min} = = \frac{t_{\rm v}\left(c \gamma_{\rm RX} + 1 \right)}{c \gamma_{\rm RX} \left(1+t_{\rm v} \right)}$ and $\alpha_{\max} = 1 - \frac{t_{\rm v}}{c \gamma_{\rm RX}}$, such that $t_{\rm v} = \exp \left(\frac{2 R_{\rm th}}{B_{\rm v}} \right) - 1$, and the function $g\left(\cdot,\cdot\right)$ is expressed, for all $\left(\alpha_1,\alpha_2\right) \in \mathbb{R}^2$, as
\begin{equation}
    \begin{split}
         g\left(\alpha_1,\alpha_2\right) &= c\left(1+t_{\rm v} \right)\Tilde{h}_{\rm 1,w}^2\alpha_1 + c\left(1+t_{\rm v} \right)\Tilde{h}_{\rm 2,w}^2\alpha_2 \\
         &+ c\Tilde{h}_{\rm 1,w}\Tilde{h}_{\rm 2,w} \sqrt{\alpha_1 \alpha_2} - t \left(c\Tilde{h}_{\rm 1,w}^2 + c\Tilde{h}_{\rm 2,w}^2 + \frac{1}{\gamma_{\rm RX}} \right).
    \end{split}
\end{equation}
\end{theorem}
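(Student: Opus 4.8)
The plan is to dispose of the rate constraints one at a time by inverting the strictly increasing map $x \mapsto \frac{B_{\rm v}}{2}\log(1+x)$, and then to reduce the existence of a feasible point to a single corner evaluation. First I would rewrite each inequality $R \ge R_{\rm th}$ as the condition that the SINR-like argument of the corresponding logarithm be at least $t_{\rm v} = \exp(2R_{\rm th}/B_{\rm v}) - 1$. Applying this to \eqref{Const:C4} and \eqref{Const:C6} gives $c\gamma_{\rm RX}(1-\alpha_i) \ge t_{\rm v}$, i.e. $\alpha_i \le \alpha_{\max}$; applying it to \eqref{Const:C3} and \eqref{Const:C5} and clearing the positive denominator $c(1-\alpha_i) + 1/\gamma_{\rm RX}$ gives $c\alpha_i(1+t_{\rm v}) \ge t_{\rm v}(c + 1/\gamma_{\rm RX})$, i.e. $\alpha_i \ge \alpha_{\min}$. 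Thus \eqref{Const:C3}--\eqref{Const:C6} collapse to $\alpha_1, \alpha_2 \in [\alpha_{\min}, \alpha_{\max}]$.

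Next I would observe that $[\alpha_{\min},\alpha_{\max}] \subseteq [0,1]$ whenever it is nonempty: $\alpha_{\min} > 0$ and $\alpha_{\max} \le 1$ hold unconditionally, and a short manipulation shows $\alpha_{\min} \le 1 \iff t_{\rm v} \le c\gamma_{\rm RX} \iff \alpha_{\max} \ge 0$, so $\alpha_{\min} \le \alpha_{\max}$ (Condition 1) forces the whole interval into $[0,1]$ and renders the box constraint \eqref{Const:C1} redundant. Hence the feasible set of \eqref{Const:C1}--\eqref{Const:C6} is exactly the square $[\alpha_{\min},\alpha_{\max}]^2$, which is nonempty precisely when Condition 1 holds.

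It then remains to handle \eqref{Const:C7}. Inverting the logarithm once more and clearing the positive denominator of $R_{\rm w \rightarrow w}^{\rm VL}$ turns \eqref{Const:C7} into the inequality $g(\alpha_1,\alpha_2) \ge 0$ with $g$ as defined in the statement. The crux is to show that the existence of a point of the square satisfying $g \ge 0$ is equivalent to $g(\alpha_{\max},\alpha_{\max}) \ge 0$, i.e. Condition 2. For this I would argue that $R_{\rm w \rightarrow w}^{\rm VL}$, and therefore $g$, is nondecreasing in each of $\alpha_1$ and $\alpha_2$ on the square: increasing $\alpha_i$ enlarges the numerator $(\tilde{h}_{1,\rm w}\sqrt{\alpha_1}+\tilde{h}_{2,\rm w}\sqrt{\alpha_2})^2$ while shrinking the interference term $c\tilde{h}_{i,\rm w}^2(1-\alpha_i)$ in the denominator, so the maximum over the square is attained at the corner $(\alpha_{\max},\alpha_{\max})$. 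Consequently some box point satisfies $g \ge 0$ if and only if the corner does. Assembling the two directions then closes the proof: if both conditions hold, $(\alpha_{\max},\alpha_{\max})$ lies in the square (Condition 1) and satisfies \eqref{Const:C7} (Condition 2), hence is feasible; conversely, any feasible point lies in the square, forcing Condition 1, and its value of $g$ is dominated by $g(\alpha_{\max},\alpha_{\max})$, forcing Condition 2.

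The step I expect to be the main obstacle is precisely the monotonicity of $g$: the cross term $c\tilde{h}_{1,\rm w}\tilde{h}_{2,\rm w}\sqrt{\alpha_1\alpha_2}$ carries a sign determined by the precoded effective channel gains $\tilde{h}_{1,\rm w},\tilde{h}_{2,\rm w}$, so establishing $\partial g/\partial \alpha_i \ge 0$ uniformly on the square is not automatic. I would handle it either by invoking the positivity of these effective gains or, in the general case, by using the bound $\alpha_i \ge \alpha_{\min} > 0$ to control the $\sqrt{\alpha_2/\alpha_1}$ factor appearing in the derivative, thereby confirming that pushing both coefficients to $\alpha_{\max}$ can only increase $g$.
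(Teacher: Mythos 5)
Your proposal follows essentially the same route as the paper's proof: constraints \eqref{Const:C3}--\eqref{Const:C6} are inverted through the logarithm to yield the box $\left[\alpha_{\min},\alpha_{\max}\right]^2$ (giving Condition 1), and \eqref{Const:C7} is rewritten as $g\left(\alpha_1,\alpha_2\right) \geq 0$, whose feasibility over that box is reduced to the single corner evaluation $g\left(\alpha_{\max},\alpha_{\max}\right) \geq 0$ (giving Condition 2). The only difference is one of care, in your favor: the paper dismisses the monotonicity of $g$ as ``obvious,'' whereas you correctly flag that the cross term $c\tilde{h}_{\rm 1,w}\tilde{h}_{\rm 2,w}\sqrt{\alpha_1\alpha_2}$ makes this step depend on the signs of the effective (precoded) gains, a point the paper's proof silently assumes.
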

\begin{proof}
See Appendix \ref{Appendix: Theorem 1}.
\end{proof}
Based on \textbf{Theorem 1} and its proof in Appendix \ref{Appendix: Theorem 1}, the feasibility region of the optimization problem $\mathcal{P}_1$ is defined by the set $\left\{\left(\alpha_1,\alpha_2\right) \in \left[\alpha_{\min},\alpha_{\max} \right]^2 \big| g\left(\alpha_1,\alpha_2\right) \geq 0 \right\}$. Now that the feasibility conditions are set, our objective is to find the optimal solution of problem $\mathcal{P}_1$, i.e., the optimal values of the power allocation fractions $\alpha_1$ and $\alpha_2$ that maximize the network sum data rate $R_{\rm sum}^{\rm VL} = R_{\rm a}+R_{\rm b}+ R_{\rm w}^{\rm VL}$. In this setup, since $SU_{\rm a}$ and $SU_{\rm b}$ are the strong UEs and $WU$ is the weak UE, the optimal power allocation strategy that maximizes the network sum data rate is the one that allocates the lowest possible power to the weak UE $WU$ while guaranteeing its required data rate threshold $R_{\rm th}$, and the remaining of the power to the strong UEs $SU_{\rm a}$ and $SU_{\rm b}$ \cite{huu2020low,elhattab2022joint}. Therefore, since the expressions of the achievable data rates $R_{\rm a}$ and $R_{\rm b}$ are decreasing functions of $\alpha_1$ and $\alpha_2$, respectively, and the expression of the achievable data rate $R_{\rm w}^{\rm VL}$ is an increasing function with respect to $\alpha_1$ and $\alpha_2$, then the optimal power allocation strategy is the one that satisfies the inequality $g\left(\alpha_1,\alpha_2\right) \geq 0$ with the lowest possible values of $\alpha_1$ and $\alpha_2$ within the square $\left[\alpha_{\min},\alpha_{\max} \right]^2$. \\
\indent In order to determine the optimal power allocation coefficients $\left(\alpha_{1}^*,\alpha_{2}^*\right)$, we opt for a discrete line search technique within the segment $[\alpha_{\min}, \alpha_{\max}]$. Let $K \in \mathbb{N}$ be the number of discrete points within $[\alpha_{\min}, \alpha_{\max}]$. Based on this, the discrete line search technique within the segment $[\alpha_{\min}, \alpha_{\max}]$ works as follows. For all $i \in \llbracket 0, K-1 \rrbracket$, we calculate $\alpha_1^i = \alpha_{\min} + \frac{\alpha_{\max}-\alpha_{\min}}{K-1}\times i$. Then, we determine the lowest value of $\alpha_2^i$ that satisfies the inequality $g(\alpha_{1}^i,\alpha_{2}^i) \geq 0$ using the approach presented in Appendix \ref{Appendix: line search sum rate}. Afterwards, we calculate the corresponding network sum data rate $R_i(\alpha_1^i,\alpha_2^i) = R_{\rm a}(\alpha_1^i)+R_{\rm b}(\alpha_2^i)+R_{\rm w}^{\rm VL}(\alpha_1^i,\alpha_2^i)$. Finally, the optimal power allocation fractions $(\alpha_1^*,\alpha_2^*)$, solution of problem $\mathcal{P}_1$, is the one that achieves the highest network sum data rate, i.e., 
\begin{equation}
    (\alpha_1^*,\alpha_2^*) = \underset{(\alpha_1^i,\alpha_2^i)}{\rm argmax}\,\, R_i(\alpha_1^i,\alpha_2^i),
\end{equation}
which can be obtained through a brute force search over the set $\left\{R_i(\alpha_1^i,\alpha_2^i) \Big| i \in \llbracket 0, K-1 \rrbracket \right\}$. As it can be seen, the proposed solution approach is based on a discrete line search technique over a set of $K$ points, which has a complexity of $\mathcal{O}(K)$, i.e., a linear complexity. This fact demonstrates the low complexity of the proposed solution approach.
\subsubsection{CoMP-Assisted C-NOMA} Under this scheme, the objective of maximizing the sum data rate of the considered VLC cellular system while guaranteeing the required data rate threshold $R_{\rm th}$ at each UE can be reached by solving the following optimization problem.
\allowdisplaybreaks
\begingroup
\begin{subequations}
\label{OPT_2}
\begin{align}
\mathcal{P}_2:\,\, &R_{\rm sum}^{{{\rm VL}/{\rm RF}}^*}=\max_{\alpha_1, \alpha_2}\,\, R_{\rm a}(\alpha_1)+R_{\rm b}(\alpha_2)+R_{\rm w}^{\rm RF}(\alpha_1,\alpha_2),\label{Prob2} \\
&\text{s.t.}\qquad \,\,\,\,  \eqref{Const:C1}-\eqref{Const:C6},\\
&\qquad \qquad  R_{\rm a,b \rightarrow w}^{\rm RF} \geq R_{\rm th}.
\end{align}
\end{subequations}
\endgroup
Based on the rate expressions presented in \eqref{eq:a_w_rate}-\eqref{eq:b_rate}, problem $\mathcal{P}_2$ is a non-linear non-convex problem that cannot be solved in a straightforward manner. Alternatively, we propose in the following an efficient and low complexity approach to solve problem $\mathcal{P}_2$. First, and similar to the CoMP-assisted NOMA scheme, the conditions under which at least one feasible solution for problem $\mathcal{P}_2$ exists must be derived. In this context, the feasibility conditions of problem $\mathcal{P}_2$ are presented in the following theorem.
\begin{theorem}
Problem $\mathcal{P}_2$ is feasible if and only if the following conditions hold: 
\begin{subequations}
\label{eq:cdts_comp_cnoma}
\begin{align}
    &\text{Condition 1:} \,\, \alpha_{\min} \leq \alpha_{\max}, \label{eq:cdt21}\\
    &\text{Condition 2:} \,\,  t_{\rm r} \leq \frac{|g_{\rm a,w}|^2 P_{\rm a}^{\rm RF} + |g_{\rm b,w}|^2 P_{\rm b}^{\rm RF}}{\sigma_{\rm r}^2}, \label{eq:cdt22}
\end{align}
\end{subequations}
where $t_{\rm r} = \exp \left(\frac{2 R_{\rm th}}{B_{\rm r}} \right) - 1$.
\end{theorem}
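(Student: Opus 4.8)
The plan is to reduce problem $\mathcal{P}_2$ to an explicit description of its feasible set and then characterize when that set is nonempty. The guiding observation is that, unlike the weak-UE constraint \eqref{Const:C7} in problem $\mathcal{P}_1$, whose rate $R_{\rm w \rightarrow w}^{\rm VL}$ couples $\alpha_1$ and $\alpha_2$ through the bilinear function $g(\alpha_1,\alpha_2)$, the cooperative RF rate $R_{\rm a,b \rightarrow w}^{\rm RF}$ does not depend on the power allocation fractions at all: the RF transmit powers $P_{\rm a}^{\rm RF}$ and $P_{\rm b}^{\rm RF}$ are fixed by the harvested energies $E_{\rm a}$, $E_{\rm b}$, which in turn depend only on the received DC currents. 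Consequently the constraint $R_{\rm a,b \rightarrow w}^{\rm RF} \geq R_{\rm th}$ decouples from the optimization variables, and the feasibility analysis splits into two independent parts.

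First I would translate the strong-UE constraints \eqref{Const:C3}--\eqref{Const:C6} into interval bounds on $\alpha_1$ and $\alpha_2$. Since $R_{\rm a}(\alpha_1)$ in \eqref{eq:a_rate} is strictly decreasing in $\alpha_1$, inverting the logarithm with $t_{\rm v} = \exp(2R_{\rm th}/B_{\rm v})-1$ turns \eqref{Const:C4} into $\alpha_1 \leq 1 - t_{\rm v}/(c\gamma_{\rm RX}) = \alpha_{\max}$, and symmetrically \eqref{Const:C6} gives $\alpha_2 \leq \alpha_{\max}$. Likewise, because $R_{\rm a \rightarrow w}(\alpha_1)$ in \eqref{eq:a_w_rate} is strictly increasing in $\alpha_1$, the same inversion turns \eqref{Const:C3} into $\alpha_1 \geq t_{\rm v}(c\gamma_{\rm RX}+1)/(c\gamma_{\rm RX}(1+t_{\rm v})) = \alpha_{\min}$, and \eqref{Const:C5} gives $\alpha_2 \geq \alpha_{\min}$. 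Hence constraints \eqref{Const:C3}--\eqref{Const:C6} are equivalent to $(\alpha_1,\alpha_2) \in [\alpha_{\min},\alpha_{\max}]^2$. These are exactly the bounds already extracted in the proof of \textbf{Theorem 1}, so this step can be inherited directly from Appendix \ref{Appendix: Theorem 1}.

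Next I would dispatch the cooperative constraint: since $R_{\rm a,b \rightarrow w}^{\rm RF}$ is constant in $(\alpha_1,\alpha_2)$, the requirement $R_{\rm a,b \rightarrow w}^{\rm RF} \geq R_{\rm th}$ inverts, via $t_{\rm r} = \exp(2R_{\rm th}/B_{\rm r}) - 1$, into precisely $t_{\rm r} \leq (|g_{\rm a,w}|^2 P_{\rm a}^{\rm RF} + |g_{\rm b,w}|^2 P_{\rm b}^{\rm RF})/\sigma_{\rm r}^2$, which is \eqref{eq:cdt22}. Finally I would assemble the pieces: the box constraint \eqref{Const:C1} adds nothing, because $\alpha_{\min} \geq 0$ and $\alpha_{\max} \leq 1$ hold identically (both $t_{\rm v}/(c\gamma_{\rm RX})$ and the nonnegative ratio defining $\alpha_{\min}$ are nonnegative), so $[\alpha_{\min},\alpha_{\max}] \subseteq [0,1]$ whenever it is nonempty. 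Therefore the $\alpha$-dependent part of the feasible set is nonempty if and only if $\alpha_{\min} \leq \alpha_{\max}$, i.e.\ \eqref{eq:cdt21}, while the cooperative part holds if and only if \eqref{eq:cdt22} holds; since the two parts are independent (for any admissible $(\alpha_1,\alpha_2)$ the RF condition is unaffected, and conversely \eqref{eq:cdt22} constrains nothing in the $\alpha$-plane), problem $\mathcal{P}_2$ is feasible precisely when both conditions hold. The point requiring care is not any single inversion but verifying this decoupling explicitly and confirming that \eqref{Const:C1} is redundant, so that no third, hidden feasibility condition arises; this is exactly the simplification that RF relaying affords over the VLC joint-transmission coupling of \textbf{Theorem 1}.
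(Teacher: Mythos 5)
Your proof is correct and follows essentially the same route as the paper: the paper's own proof of this theorem simply defers to the proof of \textbf{Theorem 1} (inverting the strong-UE rate constraints \eqref{Const:C3}--\eqref{Const:C6} to get the box $\left[\alpha_{\min},\alpha_{\max}\right]^2$, then handling the weak-UE constraint separately), and your argument carries out exactly that reduction, with the RF constraint decoupling into \eqref{eq:cdt22} because $R_{\rm a,b \rightarrow w}^{\rm RF}$ is independent of $\left(\alpha_1,\alpha_2\right)$. Your explicit verification of the decoupling and of the redundancy of \eqref{Const:C1} is a more careful write-up of what the paper leaves implicit.
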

\begin{proof}
The proof can be easily deducted from the one of \textbf{Theorem 1} in Appendix \ref{Appendix: Theorem 1}.
\end{proof}
It is important to mention that the condition in \eqref{eq:cdt22} is related to the RF relaying links from the strong UEs to the weak UE and is not related to the power allocation fractions $\left(\alpha_1, \alpha_2\right)$. Specifically, the RF relaying links are governed by the RF channel conditions, represented by the channel coefficients $g_{\rm a,w}$ and $g_{\rm a,b}$, the electrical powers $P_{\rm a}^{\rm RF}$ and $P_{\rm b}^{\rm RF}$ that are resulting from the harvested optical powers at the strong UEs, the available RF bandwidth $B_{\rm r}$ and the electrical noise power $\sigma_r^2$ at the $WU$. When combined together, these parameters have to make the resulting RF data rate at $WU$ that is given by $R_{\rm a,b \rightarrow w}^{\rm RF}$ greater than the required data rate threshold $R_{\rm th}$, which is the focus of the condition in \eqref{eq:cdt22}. In this context, assuming that the condition in \eqref{eq:cdt22} is satisfied, and based on \textbf{Theorem 2}, the feasibility region of problem $\mathcal{P}_2$ is the square $\left[\alpha_{\min},\alpha_{\max} \right]^2$.\\
\indent Now that the feasibility conditions are set, our objective is to find the optimal solution of problem $\mathcal{P}_2$, i.e., the optimal values of the power allocation fractions $\alpha_1$ and $\alpha_2$ that maximize the network sum data rate $R_{\rm sum}^{{{\rm VL}/ {\rm RF}}} = R_{\rm a}+R_{\rm b}+R_{\rm w}^{\rm RF}$. In this context, $R_{\rm sum}^{{{\rm VL}/{\rm RF}}^*}$ can be expressed as in \eqref{eq:rate_sum} on top of next page.
\begin{figure*}
\begin{subequations}
\label{eq:rate_sum}
\begin{align}
R_{\rm sum}^{{{\rm VL}/{\rm RF}}^*}&=\max_{\alpha_1, \alpha_2}\,\, \left[R_{\rm a}(\alpha_1)+R_{\rm b}(\alpha_2)+R_{\rm w}^{\rm RF}(\alpha_1,\alpha_2)\right],\\
&=\max_{\alpha_1, \alpha_2} \,\, \left[R_{\rm a}(\alpha_1)+R_{\rm b}(\alpha_2)+ \min \left(R_{\rm a \rightarrow w}(\alpha_1), R_{\rm a,b \rightarrow w}^{\rm RF}, R_{\rm b \rightarrow w}(\alpha_2)\right)\right],\\
&= \max_{\alpha_1, \alpha_2} \,\, \min \bigg( R_{\rm a}(\alpha_1)+R_{\rm b}(\alpha_2)+R_{\rm a \rightarrow w}(\alpha_1), R_{\rm a}(\alpha_1)+R_{\rm b}(\alpha_2)+R_{\rm a,b \rightarrow w}^{\rm RF}, R_{\rm a}(\alpha_1)+R_{\rm b}(\alpha_2)+R_{\rm b \rightarrow w}(\alpha_2)\bigg) \nonumber\\
&= \max_{\alpha_1, \alpha_2} \,\, \min \bigg(R_{\rm b}(\alpha_2) +  \frac{B_{\rm v}}{2} \log \left(1 + c\gamma_{\rm RX} \right), R_{\rm a}(\alpha_1)+R_{\rm b}(\alpha_2)+R_{\rm a,b \rightarrow w}^{\rm RF}, R_{\rm a}(\alpha_1)+\frac{B_{\rm v}}{2} \log \left(1 + c\gamma_{\rm RX} \right)\bigg).
\end{align}
\end{subequations}
\noindent\makebox[\linewidth]{\rule{\textwidth}{0.4pt}}
\end{figure*}
Since the expressions of the achievable data rates $R_{\rm a}(\alpha_1)$ and $R_{\rm b}(\alpha_2)$ are decreasing functions with respect to $\alpha_1$ and $\alpha_2$, respectively, the first term inside the $\min$ operator in \eqref{eq:rate_sum} is maximized when $\alpha_2 = \alpha_{\min}$, the second is maximized when $\left(\alpha_1,\alpha_2\right) = \left(\alpha_{\min},\alpha_{\min}\right)$, and the third term is maximized when $\alpha_1 = \alpha_{\min}$. Based on this, we conclude that the optimal solution of problem $\mathcal{P}_2$ is $\left(\alpha_1^*,\alpha_2^*\right) = \left(\alpha_{\min},\alpha_{\min}\right)$.
\subsection{Minimum data rate Maximization}
\label{Sec:Min_Rate}
The second objective of this paper is to maximize the minimum data rate of the UEs within the considered VLC cellular system under the CoMP-assisted NOMA and the CoMP-assisted C-NOMA schemes.
\subsubsection{CoMP-Assisted NOMA} For this scheme, and by recalling the expression of $R_{\rm w}^{\rm VL}$ in \eqref{rate_w_combined}, the target objective can be reached by solving the optimization problem $\mathcal{P}_3$ in \eqref{OPT_3} on top of next page.
\begin{figure*}
    \begin{subequations}
\label{OPT_3}
\begin{align}
\mathcal{P}_3:\,\, &R_{\rm min}^{{\rm VL}^*}=\max_{\alpha_1, \alpha_2}\,\, \left[ \min \left\{R_{\rm a \rightarrow w}(\alpha_1), R_{\rm a}(\alpha_1), R_{\rm b \rightarrow w}(\alpha_2) , R_{\rm b}(\alpha_2) , R_{\rm w \rightarrow w}^{\rm VL}(\alpha_1,\alpha_2) \right\} \right],\label{Prob3} \\
&\text{s.t.}\qquad \,\,\,\,  0 \leq \alpha_i \leq 1, \quad \forall\,\, i \in \{1,2\}.
\end{align}
\end{subequations}
\noindent\makebox[\linewidth]{\rule{\textwidth}{0.4pt}}
\end{figure*}
Based on the rate expressions presented in \eqref{eq:a_w_rate}-\eqref{rate_w}, problem $\mathcal{P}_3$ is a non-linear non-convex problem that cannot be solved in a straightforward manner. Alternatively, we propose in the following an efficient and low complexity approach to solve problem $\mathcal{P}_3$, which is feasible over the entire set $\left[0,1 \right]^2$. First, it can be easily demonstrated that, for $\alpha_1, \alpha_2 \in \left[0, 1 \right]$, we have 
\begin{subequations}
\label{eq:rate_equality_a_b_w}
\begin{align}
    &R_{\rm a \rightarrow w}(\alpha_1) \geq R_{\rm a}(\alpha_1) \Leftrightarrow  \alpha_1 \geq \alpha_0,\\
    &R_{\rm b \rightarrow w}(\alpha_2) \geq R_{\rm b}(\alpha_2) \Leftrightarrow  \alpha_2 \geq \alpha_0,
\end{align}
\end{subequations}
where $\alpha_0 = \frac{\left(c \gamma_{\rm RX}+1\right)-\sqrt{c \gamma_{\rm RX}+1}}{c \gamma_{\rm RX}}$. Consequently, for all $\left(\alpha_1, \alpha_2 \right) \in [\alpha_{0}, 1]^2$, we have
\begin{equation}
    \begin{split}
        &\min \left\{R_{\rm a \rightarrow w}(\alpha_1), R_{\rm a}(\alpha_2), R_{\rm b \rightarrow w}(\alpha_1) , R_{\rm b}(\alpha_2) , R_{\rm w \rightarrow w}^{\rm VL} \right\} \\
        &= \min \left\{ R_{\rm a}(\alpha_2), R_{\rm b}(\alpha_2), R_{\rm w \rightarrow w}^{\rm VL} \right\}.
    \end{split}
\end{equation}
At this stage, we apply the discrete line search approach within the segment $\left[\alpha_0, 1 \right]$ in order to maximize $\min \left\{ R_{\rm a}(\alpha_2), R_{\rm b}(\alpha_2), R_{\rm w \rightarrow w}^{\rm VL} \right\}$. Specifically, let $K \in \mathbb{N}$ be the number of discrete points within $[\alpha_{0}, 1]$. Based on this, the discrete line search technique within the segment $[\alpha_{0}, 1]$ works as follows. For all $i \in \llbracket 0, K-1 \rrbracket$, we calculate $\alpha_1^i = \alpha_{0} + \frac{1-\alpha_{0}}{K-1}\times i$. Then, we determine the lowest value of $\alpha_2^i \in \left[ \alpha_0,1\right]$ that satisfies the inequality $R_{\rm w \rightarrow w}^{\rm VL}(\alpha_1^i,\alpha_2^i) \geq R_{\rm b}(\alpha_2^i)$ using the same approach of the sum data rate maximization problem presented in Appendix \ref{Appendix: line search sum rate}. Afterwards, we calculate the corresponding minimum data rate $R_{\rm min}^i(\alpha_1^i, \alpha_2^i) = \min \left\{R_{\rm a \rightarrow w}(\alpha_1^i), R_{\rm a}(\alpha_1^i), R_{\rm b \rightarrow w}(\alpha_2^i) , R_{\rm b}(\alpha_2^i) , R_{\rm w \rightarrow w}^{\rm VL}(\alpha_1^i,\alpha_2^i) \right\}$. Finally, the optimal power allocation fractions $(\alpha_1^*,\alpha_2^*)$, solution of problem $\mathcal{P}_3$, is the one that achieves the highest minimum data rate, i.e., 
\begin{equation}
    (\alpha_1^*,\alpha_2^*) = \underset{(\alpha_1^i,\alpha_2^i)}{\rm argmax}\,\, R_{\min}^i(\alpha_1^i,\alpha_2^i),
\end{equation}
which can be obtained through a brute force search over the set $\left\{R_{\min}^i(\alpha_1^i,\alpha_2^i) \Big| i \in \llbracket 0, K-1 \rrbracket \right\}$. As it can be seen, the proposed solution approach is based on a discrete line search technique over a set of $K$ points,  which has a complexity of $\mathcal{O}(K)$, i.e., a linear complexity, This fact demonstrates the low complexity of the proposed solution approach.
\subsubsection{CoMP-Assisted C-NOMA} Since the achievable data rate $R_{\rm a,b \rightarrow w}^{\rm RF}$ is related to the RF relaying links from the strong UEs to the weak UE and is not related to the power allocation fractions $\left(\alpha_1, \alpha_2\right)$, the maximum minimum data rate of the UEs for the considered VLC cellular system under the CoMP-assisted C-NOMA scheme is given by $R_{\rm min}^{{\rm VL}/{\rm RF}^*} = \min \left\{R_{\rm a,b \rightarrow w}^{\rm RF},R_{\rm min}^{{\rm a,b}^*} \right\}$, where $R_{\rm min}^{{\rm a,b}^*}$ is obtained by solving the optimization problem $\mathcal{P}_4$ on top of next page.
\begin{figure*}
\begin{subequations}
\label{OPT_4}
\begin{align}
\mathcal{P}_4:\,\, &R_{\rm min}^{{\rm a,b}^*}=\max_{\alpha_1, \alpha_2}\,\, \left[ \min \left\{R_{\rm a \rightarrow w}(\alpha_1), R_{\rm a}(\alpha_1), R_{\rm b \rightarrow w}(\alpha_2) , R_{\rm b}(\alpha_2) \right\} \right],\label{Prob4} \\
&\text{s.t.}\qquad \,\,\,\,  0 \leq \alpha_i \leq 1, \quad \forall\,\, i \in \{1,2\}.
\end{align}
\end{subequations}
\noindent\makebox[\linewidth]{\rule{\textwidth}{0.4pt}}
\end{figure*}
It can be easily verified that problem $\mathcal{P}_{4}$ is a non-linear non-convex problem, due to the non-convexity of the data rate expressions in \eqref{eq:a_w_rate}-\eqref{eq:b_rate}, hence, it can not be solved in a straightforward manner. Alternatively, we propose an efficient and low complexity approach to solve problem $\mathcal{P}_4$, which is feasible over the entire set $\left[0,1 \right]^2$. It can be easily noticed that the maximum value of $\min \left\{R_{\rm a \rightarrow w}(\alpha_1), R_{\rm a}(\alpha_1), R_{\rm b \rightarrow w}(\alpha_2) , R_{\rm b}(\alpha_2) \right\}$ is achieved if and only if $R_{\rm a \rightarrow w}(\alpha_1) = R_{\rm a}(\alpha_1)$ and $R_{\rm b \rightarrow w}(\alpha_2) = R_{\rm b}(\alpha_2)$, which, based on \eqref{eq:rate_equality_a_b_w}, is reached if and only if $\left(\alpha_1, \alpha_2 \right) = \left(\alpha_0,\alpha_0 \right)$. Consequently, the optimal solution of problem $\mathcal{P}_{4}$ is $\left(\alpha_1^*, \alpha_2^* \right) = \left(\alpha_0,\alpha_0 \right)$.
\section{Simulation Results}
In this section, our objective is to evaluate the performance of the proposed CoMP-assisted NOMA and CoMP-assisted C-NOMA schemes for the considered indoor VLC cellular system through extensive simulations.
\subsection{Simulations Settings}
\indent We consider an indoor environment with length $7$m and width $7$m \cite{eltokhey2020hybrid}, in which the VLC cellular system presented in Section \ref{subsec:system_model} is deployed. Each AP is oriented vertically downward, whereas each UE has a random orientation that is generated using the measurements-based orientation models proposed in \cite{soltani2018modeling,soltani2019bidirectional,arfaoui2020measurements}. Moreover, the first and second UEs are randomly located around the center of the first and the second attocells, respectively, whereas the third UE is randomly located near the edge of both attocells. Therefore, using the proposed CoMP-assisted NOMA and CoMP-assisted C-NOMA schemes, the first and second UEs are associated to the first and second attocells, respectively, and each one of them is communicating with the AP of its associated attocell, whereas the third UE is associated to the two attocells. Unless otherwise stated, the simulation parameters and settings used throughout the paper are shown in Table \ref{T2}. All the results are generated through $10^4$ independent Monte Carlo trials, where in each trial the locations and the orientations of the UEs are generated randomly as discussed above. Finally, for comparison purposes, three baselines are considered, which are
\begin{itemize}
    \item CoMP-assisted OMA: The two cells are coordinating together to serve the weak UE, whereas, unlike the proposed CoMP-assisted NOMA and CoMP-assisted C-NOMA schemes, each cell adopts frequency division multiple access (FDMA) to serve its associated UEs. 
    \item C-NOMA: Each cell adopts C-NOMA to serve its associated UEs. Specifically, each strong UE has the ability to harvest the energy from the light intensity broadcast from the APs. Therefore, by harnessing the SIC capabilities, each strong UE forwards the decoded weak UE’s signal through RF D2D link using the harvested energy. In addition, the two cells exploit the same frequency bandwidth but without any coordination between them, unlike the proposed CoMP-assisted C-NOMA scheme.
    \item NOMA: Each cell adopts NOMA to serve its associated UEs. In addition, the two cells exploit the same frequency bandwidth but without any coordination between them, unlike the proposed CoMP-assisted NOMA scheme.
\end{itemize}
\begin{table}[t]
\caption{Simulation Parameters}
\centering
\renewcommand{\arraystretch}{0.5} 
\setlength{\tabcolsep}{0.18cm} 
\begin{tabular}{| c | c | c |}
  \hline 
  Parameter & Symbol & Value  \\
  \hline
  Distance between APs & $d_{\rm AP}$ & $4$m  \\
  \hline
  DC component & $I_{\rm DC}$ & $25$ dBm  \\ 
  \hline
  Current-to-power conversion factor & $\eta$ & $0.6$ W/A  \\ 
  \hline 
  Half-power semi-angle & $\phi_{1/2}$ & $45^\circ$  \\ 
  \hline 
  Optical concentrator refractive index & $n_c$ & 1 \\ 
  \hline
  Bandwidth of a VLC AP & $B_{\rm v}$ & $20$ MHz \\ 
  \hline
  UE's height & $H_{\rm u}$ & $0.9$m  \\
  \hline
  RF bandwidth & $B_{\rm r}$ & $16$ MHz \\
  \hline
  Fill factor & $f$ & $0.75$\\
  \hline
  RF fading parameter & $F$ & $1$  \\ 
  \hline
   Height of the APs & $H_{\rm AP}$ & 2.5$m$ \\
  \hline
   Modulation index & $\nu$ & $0.33$ \\ 
  \hline
  PD responsivity & $R_{\rm p}$ & $0.58$ A/W \\ 
  \hline 
  PD geometric area & $A_{\rm PD}$ & $1$ cm$^2$ \\ 
  \hline 
   Field of view of the PD & $\Psi$ & $60^\circ$ \\ 
  \hline
  Noise power spectral density & $N_0$ & $10^{-21}$ W/Hz \\ 
  \hline
   Number of discrete points & $K$ & 1000 \\
  \hline
   Thermal voltage & $V_t$ & $25$ mV \\
  \hline
   Dark saturation current of the PD & $I_0$ & $10^{-10}$ A\\
  \hline
   Path loss exponent & $\mu$ & $2$ \\ 
  \hline
\end{tabular} 
\label{T2}
\end{table}
\subsection{Sum Data Rate Performance}
\subsubsection{On the Optimality of the Proposed Solution Approaches}
\begin{figure}[t]
   	\centering     
	\includegraphics[width=1\linewidth]{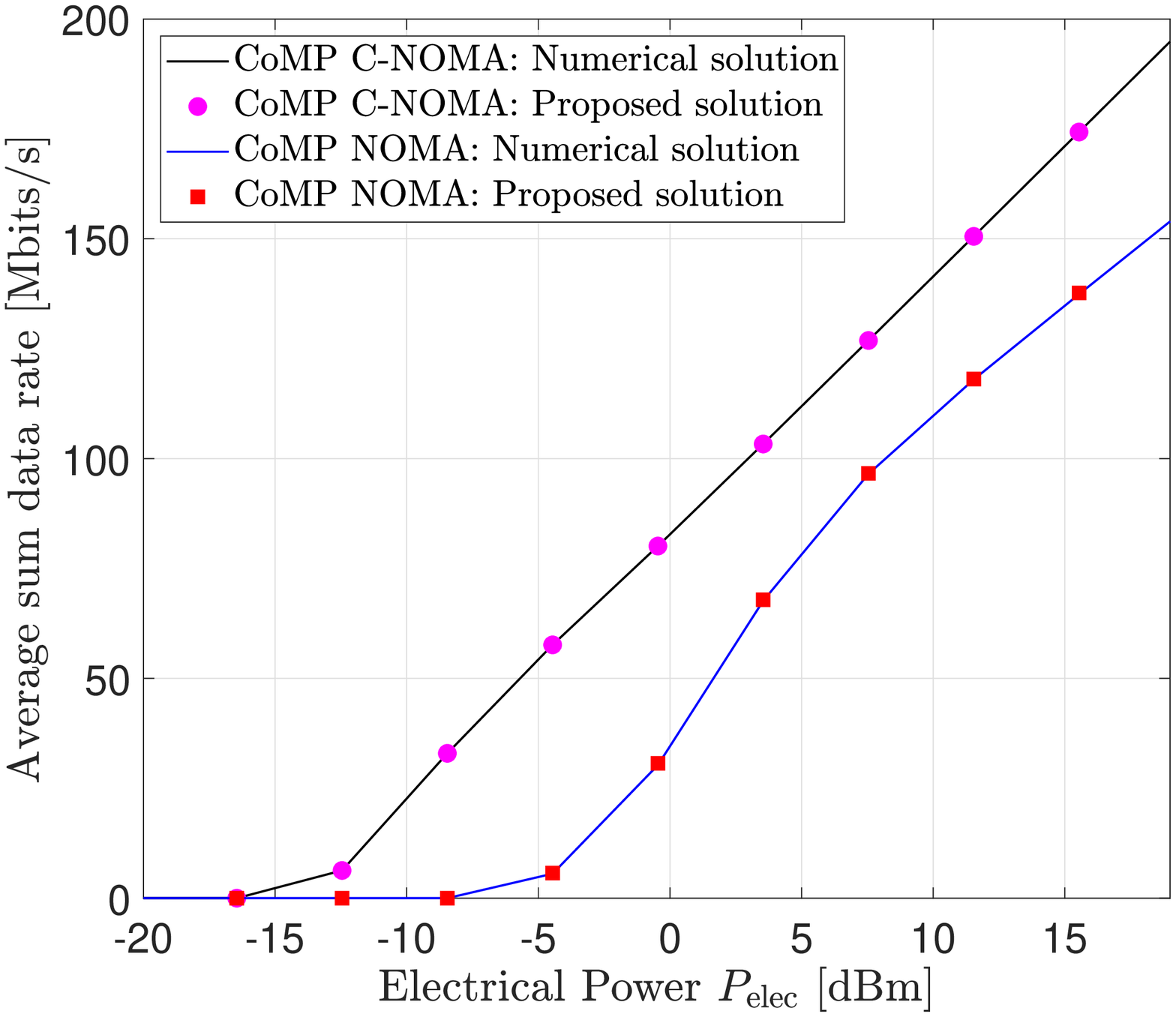}
    \captionof{figure}{Average sum data rate achieved by the numerical and the proposed solution approaches for the CoMP-assisted NOMA and the CoMP-assisted C-NOMA schemes versus the transmit electrical power $P_{\rm elec}$ at the APs.}
    \label{fig:analytical_numerical_sum_rate_power}
\end{figure}
\begin{figure}[t]
    \centering 
    \includegraphics[width=1\linewidth]{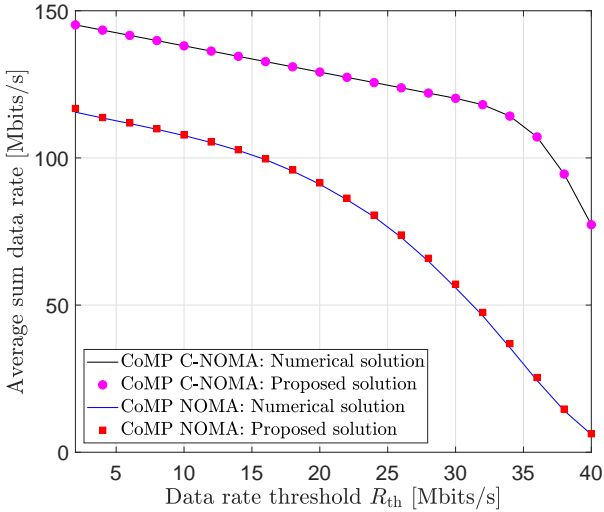}
    \captionof{figure}{Average sum data rate achieved by the numerical and the proposed solution approaches for the CoMP-assisted NOMA and the CoMP-assisted C-NOMA schemes versus the required data rate threshold $R_{\rm th}$ at the APs.}
    \label{fig:analytical_numerical_sum_rate_rth}
\end{figure}
\indent Fig.~\ref{fig:analytical_numerical_sum_rate_power} presents the average sum data rate, achieved by the proposed CoMP-assisted NOMA and CoMP-assisted C-NOMA schemes, versus the transmit electrical power $P_{\rm elec}$, when the required data rate threshold is $R_{\rm th} = 10$ [Mbit/Hz]. On the other hand, Fig.~\ref{fig:analytical_numerical_sum_rate_rth} presents the average sum data rate, achieved by the proposed CoMP-assisted NOMA and CoMP-assisted C-NOMA schemes, versus the required data rate threshold $R_{\rm th}$, when the transmit electrical power is $P_{\rm elec} = 9.5$ dBm. For the two proposed schemes, the results of both the numerical solutions and the proposed solutions of the power allocation coefficients are presented. The numerical solutions of problems $\mathcal{P}_1$ and $\mathcal{P}_2$ are obtained using an off-the-shelf optimization solver, whereas the analytical solutions are obtained through our proposed power allocation schemes.\footnote{\indent The adopted solver is $\tt{fmincon}$, which is a predefined $\tt{MATLAB}$ solver \cite{ebbesen2012generic}. In addition, $100$ distinct initial points were randomly generated within the feasibility region of the optimization variables in order to converge to the optimal solution. Specifically, through the use of the $\tt{fmincon}$ solver, each initial point will lead to a given local extremum. Then, once all local extrema are collected, a simple brute force search over the obtained extrema is applied to obtain the optimal solution. Although this heuristic approach suffers from its high complexity, it was demonstrated that it is very effective in finding the optimal solutions of non-convex problems as shown in \cite{huu2020low,elhattab2020power,elhattab2022ris,elhattab2022joint}. Nevertheless, it is important to mention that the optimal solutions of the considered problems can also be obtained using the technique proposed in \cite{papanikolaou2020optimal}, which has a complexity of $\mathcal{O}\left(\frac{1}{\epsilon^2}\right)$ when the target accuracy of the solution is less than $\epsilon$.} Fig.~\ref{fig:analytical_numerical_sum_rate_power} and \ref{fig:analytical_numerical_sum_rate_rth} show that the proposed solutions of the power allocation coefficients of the proposed CoMP-assisted NOMA and CoMP-assisted C-NOMA schemes match perfectly the numerical solutions, which demonstrates their optimality. Moreover, it demonstrates the superiority of the CoMP-assisted C-NOMA scheme over the CoMP-assisted NOMA scheme. This is basically due to the additional power at the strong UEs that is harvested from the APs and then used to relay the data of the weak UE. This harvested energy is resulting from the DC component received at the strong UEs, which is unexploited in the CoMP-assisted NOMA scheme.
\subsubsection{Effect of the Transmit Electrical Power $P_{\rm elec}$} 
\begin{figure}[t]
	\centering
	\begin{subfigure}[b]{0.5\columnwidth}
		\centering
		\includegraphics[width=\columnwidth,draft=false]{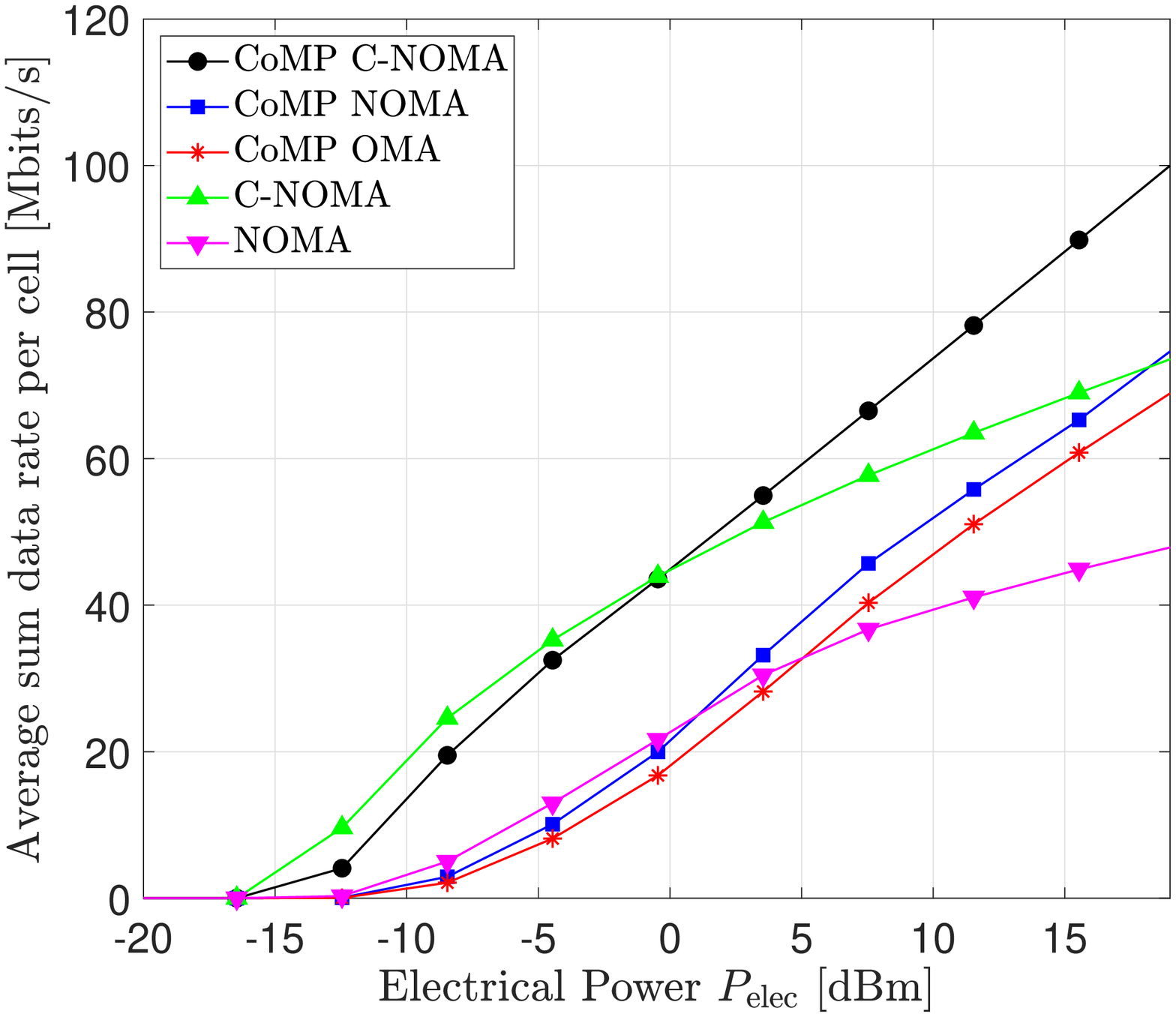}
		\caption{$d_{\rm AP} = 4$m, $\Phi_{1/2} = 45^\circ$.}
		\label{fig:I2d4phi45}
	\end{subfigure}%
	~
	\begin{subfigure}[b]{0.5\columnwidth}
		\centering
		\includegraphics[width=\columnwidth,draft=false]{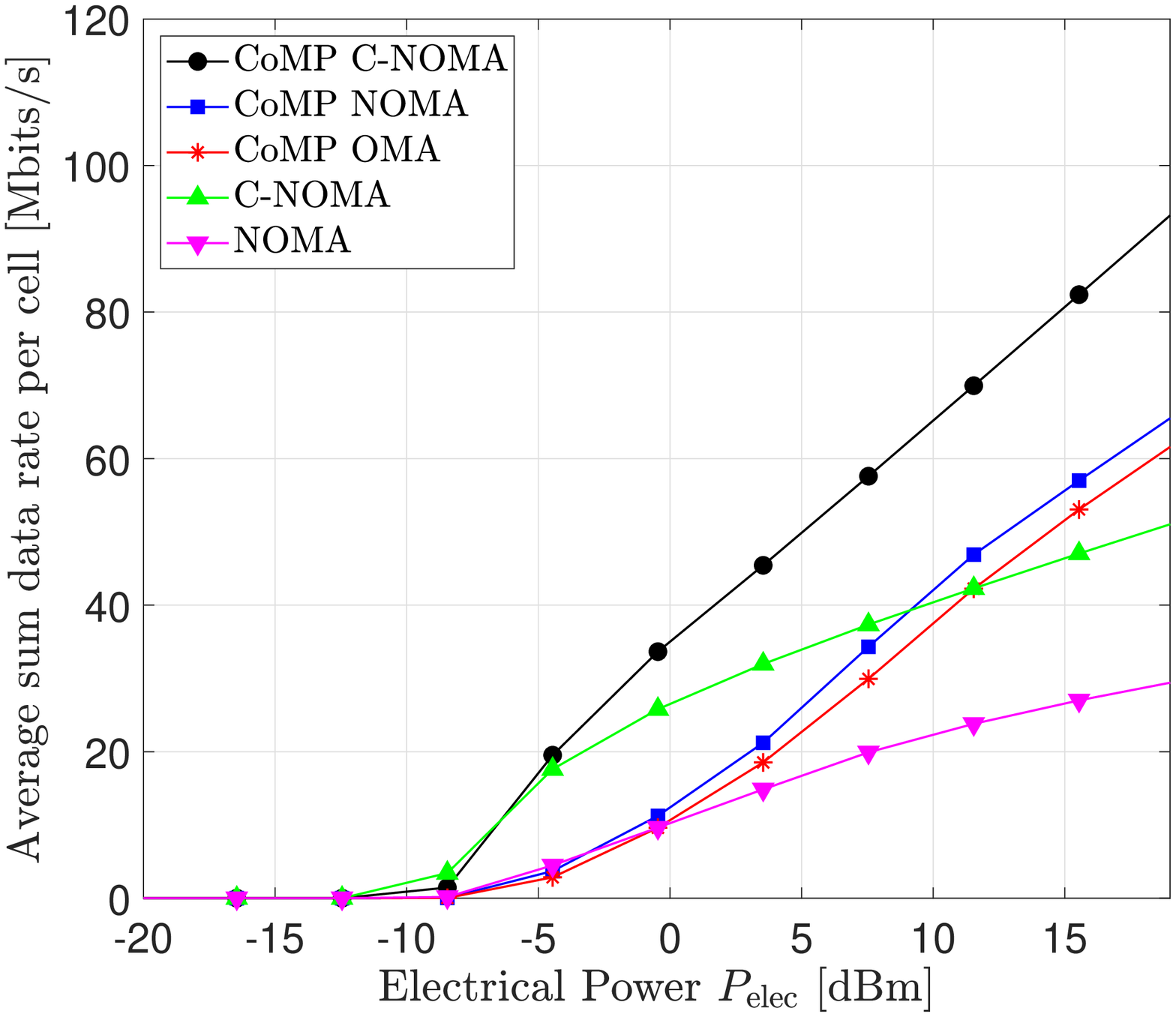}
		\caption{$d_{\rm AP} = 4$m, $\Phi_{1/2} = 60^\circ$.}
		\label{fig:I2d4phi60}
	\end{subfigure}\\
	\begin{subfigure}[b]{0.5\columnwidth}
		\centering
		\includegraphics[width=\columnwidth,draft=false]{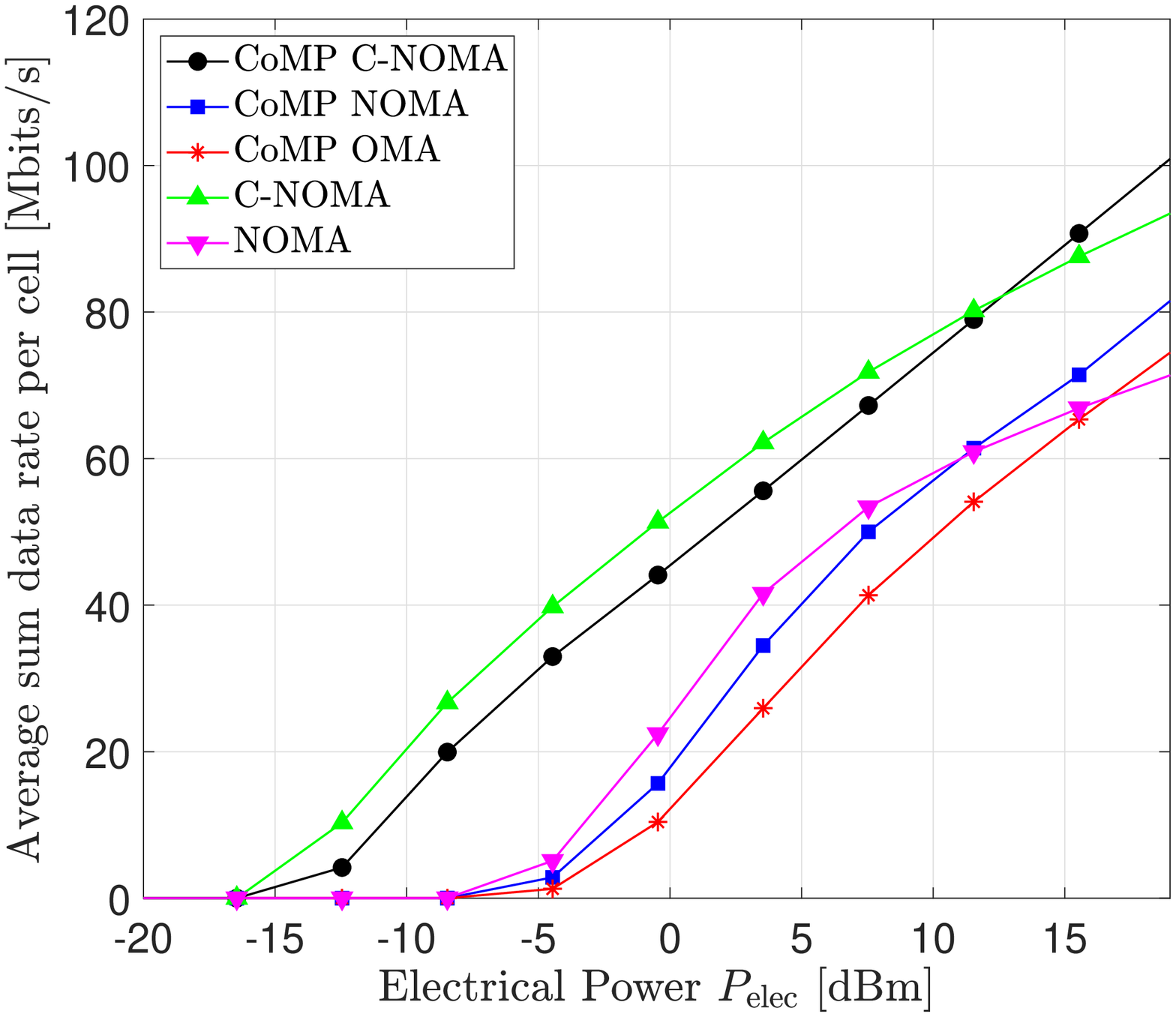}
		\caption{$d_{\rm AP} = 5$m, $\Phi_{1/2} = 45^\circ$.}
		\label{fig:I2d5phi45}
	\end{subfigure}%
	~
	\begin{subfigure}[b]{0.5\columnwidth}
		\centering
		\includegraphics[width=\columnwidth,draft=false]{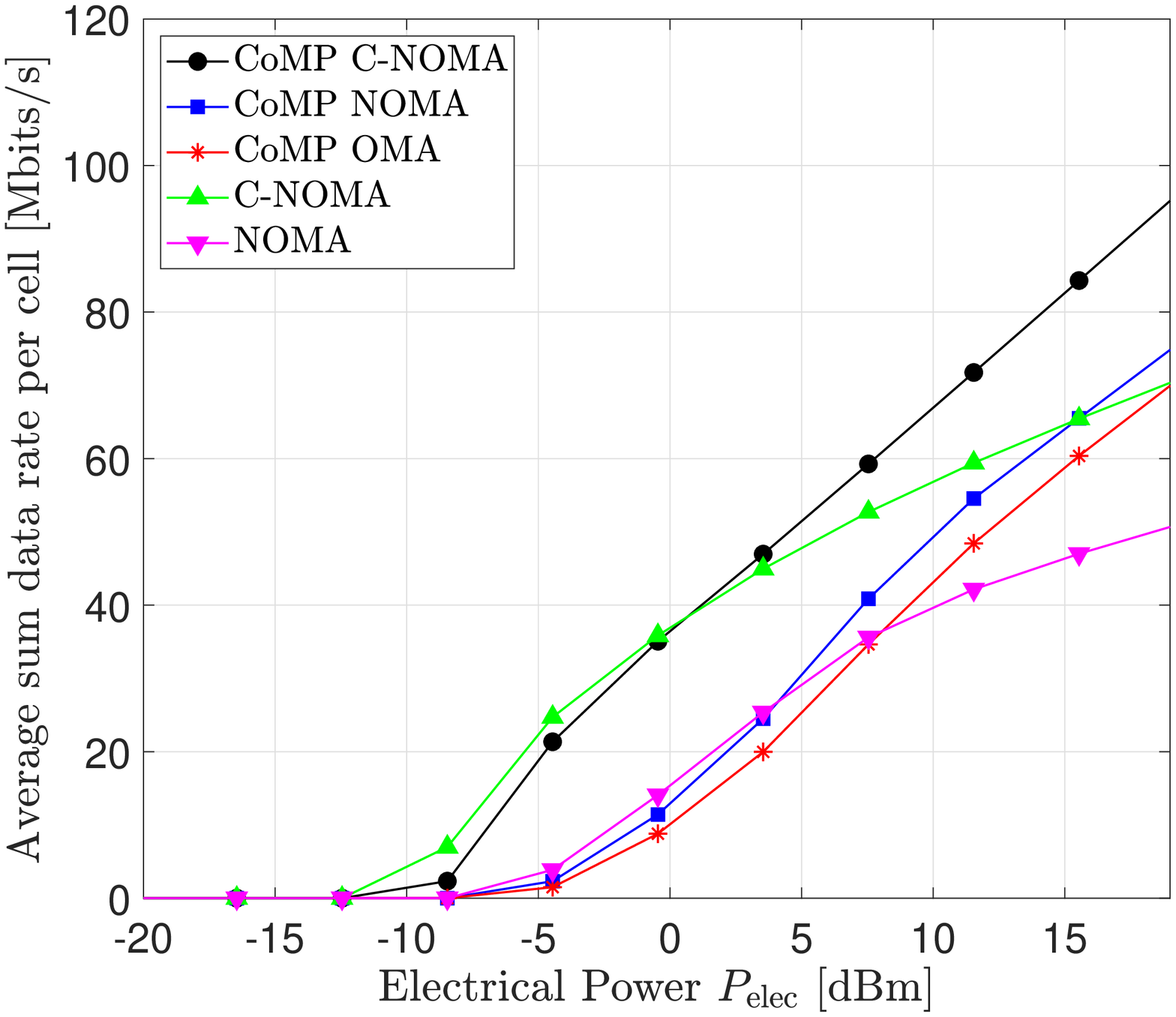}
		\caption{$d_{\rm AP} = 5$m, $\Phi_{1/2} = 60^\circ$.}
		\label{fig:I2d5phi60}
	\end{subfigure}
	\caption{Average achievable sum data rate per cell, achieved by the proposed CoMP-assisted NOMA and CoMP-assisted C-NOMA schemes, the CoMP-assisted OMA scheme, the C-NOMA scheme and the NOMA scheme, versus the transmit electrical power $P_{\rm elec}$.}
	\label{fig:sum_rate_power}
\end{figure}
\indent Fig.~\ref{fig:sum_rate_power} presents the average sum data rate per cell, achieved by the proposed CoMP-assisted NOMA and CoMP-assisted C-NOMA schemes, the CoMP-assisted OMA scheme, the C-NOMA scheme and the NOMA scheme, versus the transmit electrical power $P_{\rm elec}$, for different values of the distance between the APs $d_{\rm AP}$ and the height of the APs $H_{\rm AP}$. The required data rate threshold is $R_{\rm th} = 10$ [Mbit/Hz]. The results of Fig.~\ref{fig:sum_rate_power} can be divided into two regimes. namely, the low power regime and the high power regime. When the transmit electrical power is low, the C-NOMA scheme outperforms the CoMP-assisted C-NOMA scheme and the NOMA scheme outperforms the CoMP-assisted NOMA scheme, with C-NOMA being the best scheme that achieves the highest sum data rate per cell. However, when the transmit electrical power is high, the CoMP-assisted C-NOMA scheme outperforms the C-NOMA scheme and the CoMP-assisted NOMA scheme outperforms the NOMA scheme, with CoMP-assisted C-NOMA being the scheme that achieves the highest sum data rate per cell. In fact, when the transmit electrical power is low, the power of the interfering signals coming from the adjacent cell is low, and therefore, the use of the coordinated ZF precoding between the APs in the CoMP-assisted schemes is meaningless. However, when the transmit electrical power is high, the power of the interfering signals coming from the adjacent cell is high. Due to this, the performance of the NOMA and the C-NOMA schemes starts to stagnate when the transmit electrical power increases, whereas accounting to the use of the coordinated ZF precoding, the performance of the CoMP-assisted schemes increases continuously when the transmit electrical power increases.
\subsubsection{Effect of the Data Rate Threshold $R_{\rm th}$}
\begin{figure}[t]
		\centering
		\begin{subfigure}[b]{0.5\columnwidth}
			\centering
			\includegraphics[width=\columnwidth,draft=false]{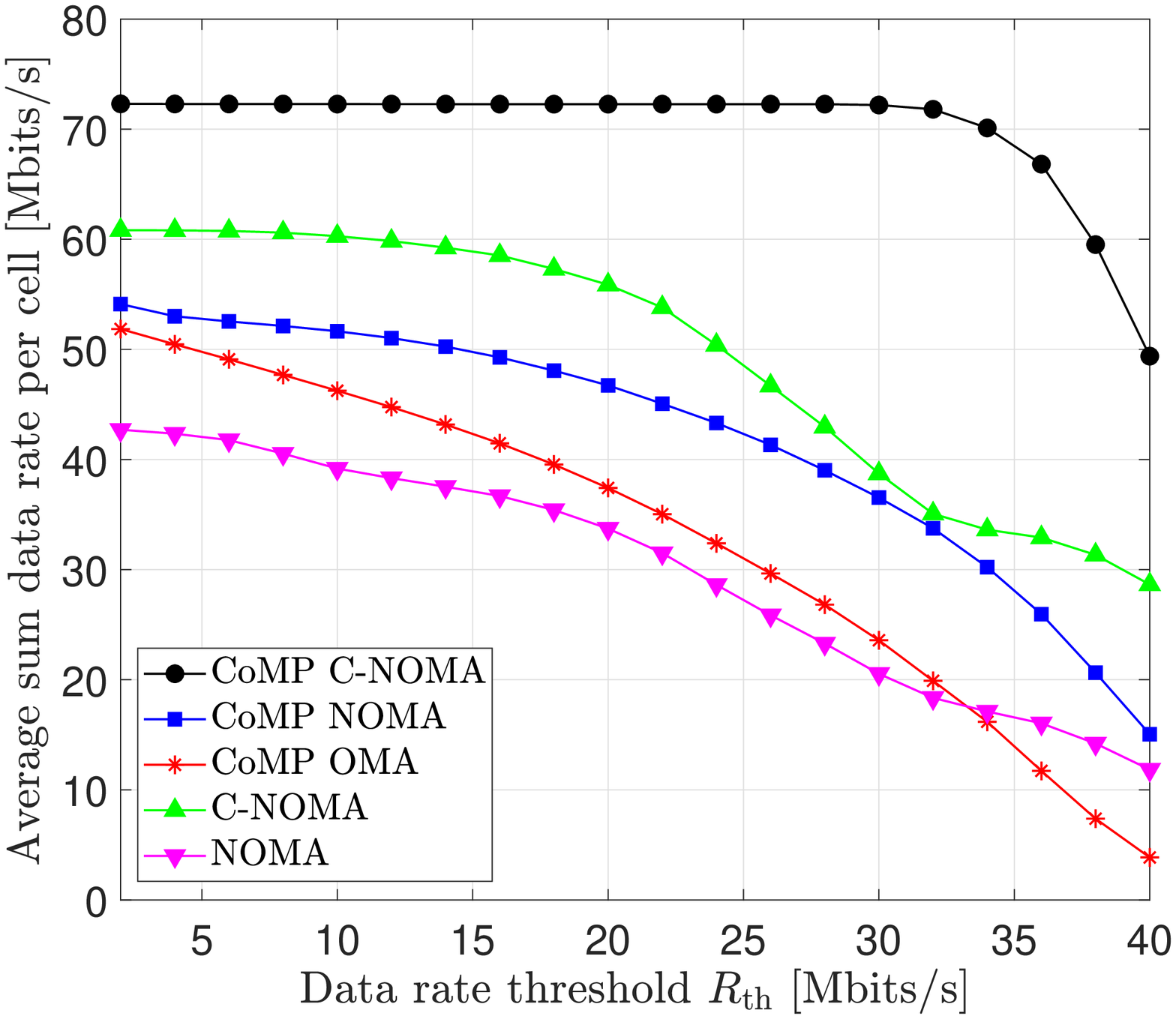}
			\caption{$d_{\rm AP} = 4$m, $H_{\rm AP} = 2.5$m.}
			\label{fig:d4phi45}
		\end{subfigure}%
		~
		\begin{subfigure}[b]{0.5\columnwidth}
			\centering
			\includegraphics[width=\columnwidth,draft=false]{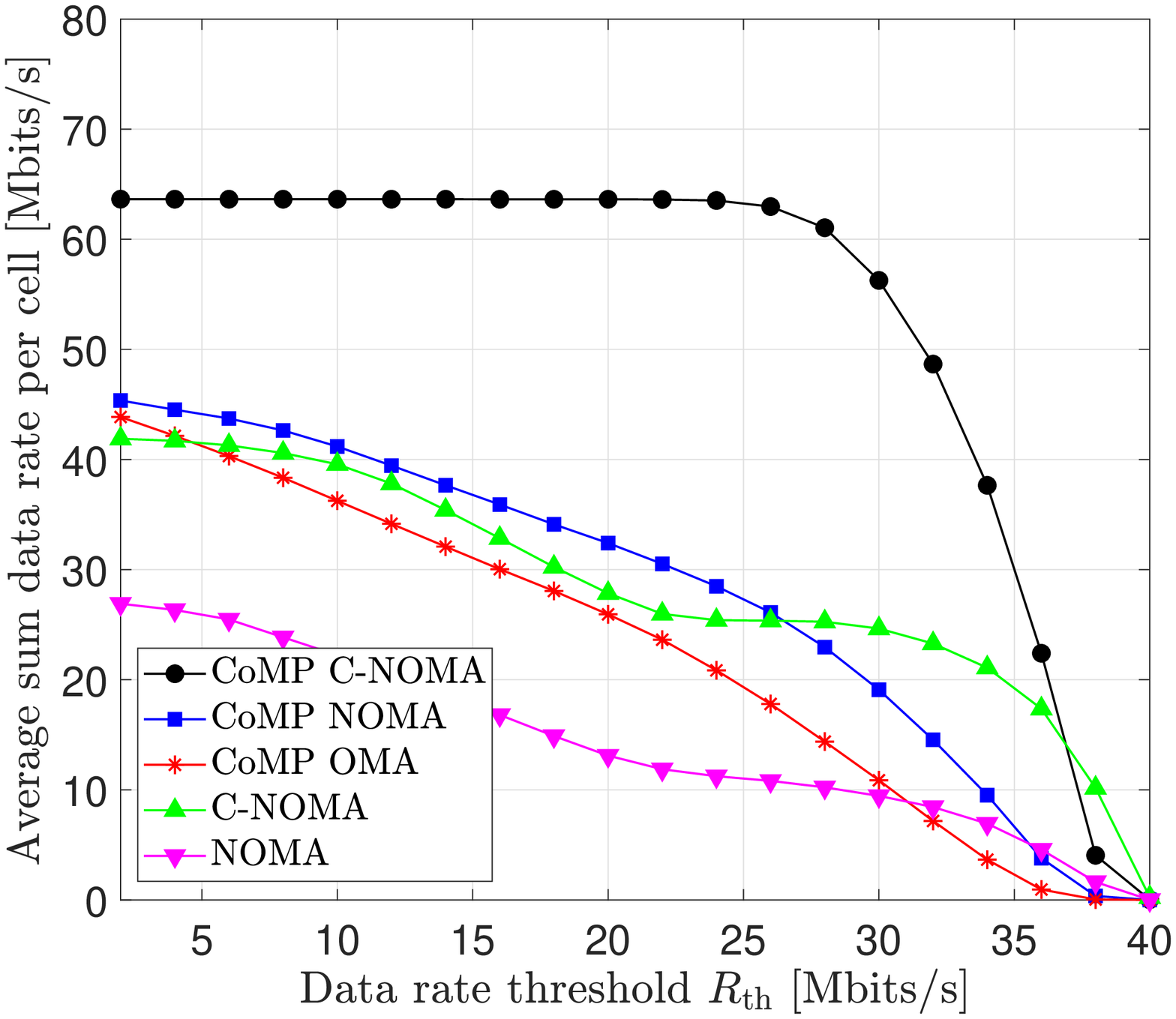}
			\caption{$d_{\rm AP} = 4$m, $H_{\rm AP} = 3$m.}
			\label{fig:d4phi60}
		\end{subfigure}\\
		\begin{subfigure}[b]{0.5\columnwidth}
			\centering
			\includegraphics[width=\columnwidth,draft=false]{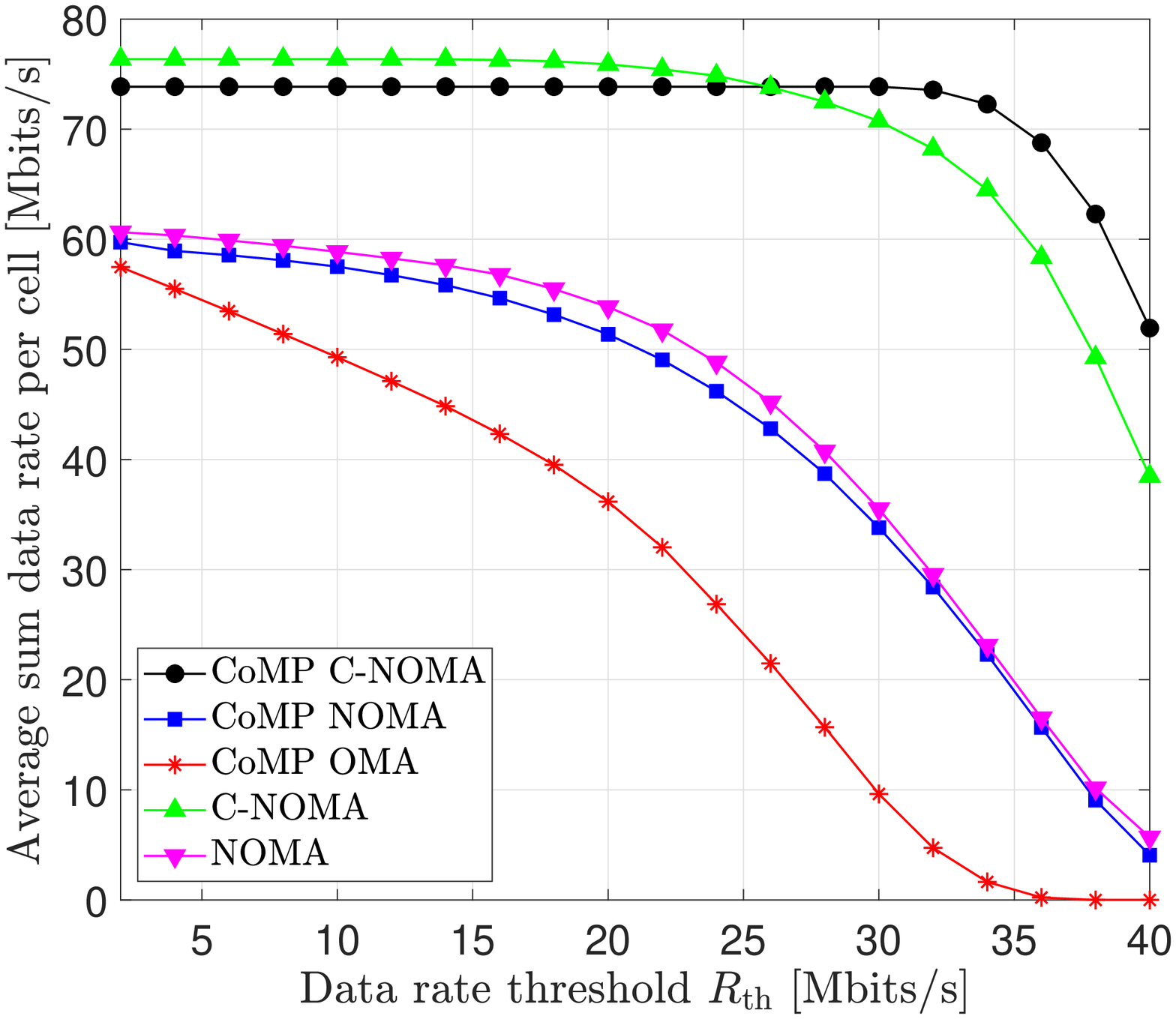}
			\caption{$d_{\rm AP} = 5$m, $H_{\rm AP} = 2.5$m.}
			\label{fig:d5phi45}
		\end{subfigure}%
		~
		\begin{subfigure}[b]{0.5\columnwidth}
			\centering
			\includegraphics[width=\columnwidth,draft=false]{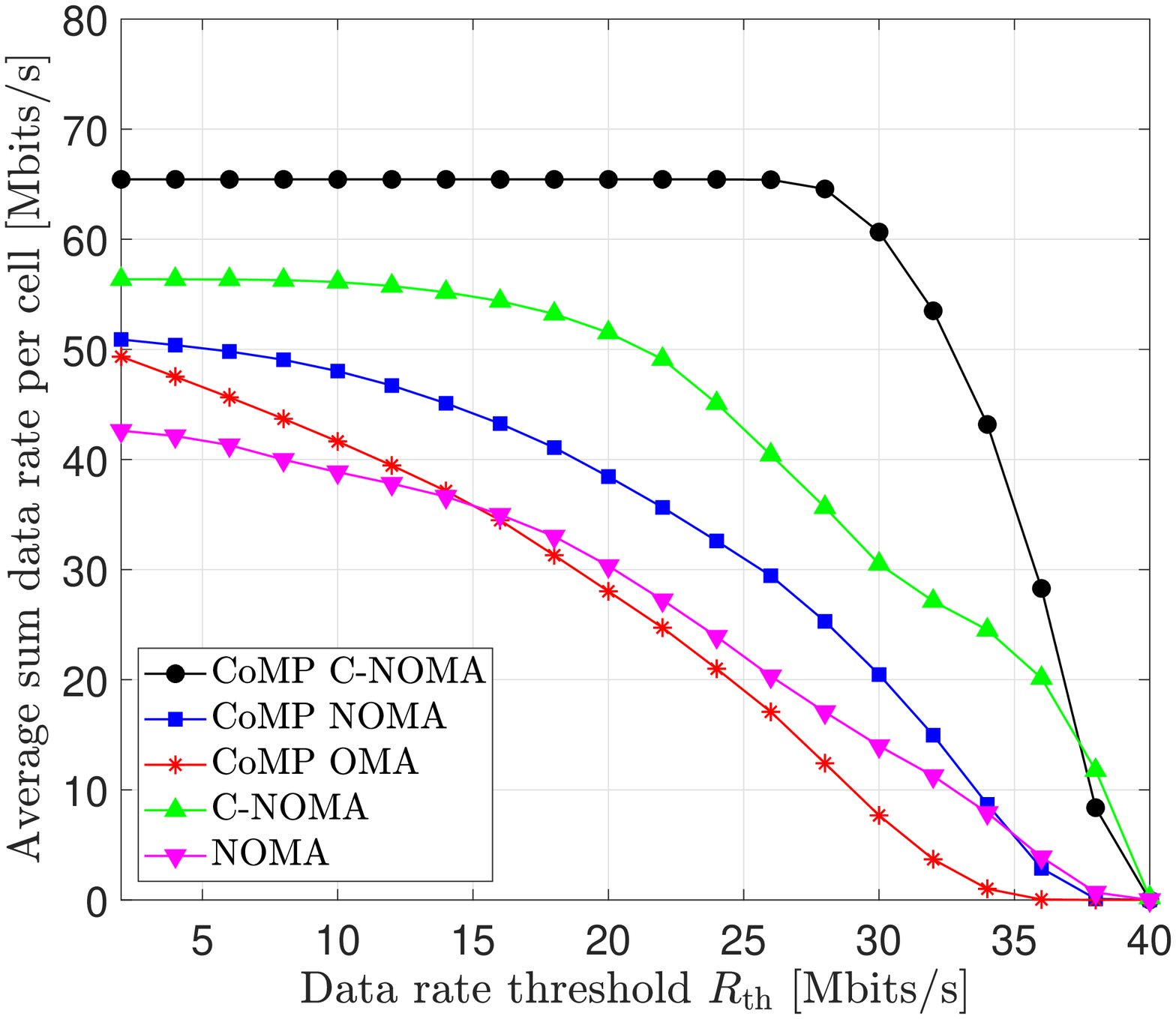}
			\caption{$d_{\rm AP} = 5$m, $H_{\rm AP} = 3$m.}
			\label{fig:d5phi60}
		\end{subfigure}
		\caption{Average achievable sum data rate per cell, achieved by the proposed CoMP-assisted NOMA and CoMP-assisted C-NOMA schemes, the CoMP-assisted OMA scheme, the C-NOMA scheme and the NOMA scheme, versus the required data rate threshold per UE $R_{\rm th}$.}
		\label{fig:sum_rate_threshold}
	\end{figure}
\indent Fig.~\ref{fig:sum_rate_threshold} presents the average sum data rate per cell, achieved by the proposed CoMP-assisted NOMA and CoMP-assisted C-NOMA schemes, the CoMP-assisted OMA scheme, the C-NOMA scheme and the NOMA scheme, versus the required data rate threshold per UE $R_{\rm th}$ for different values of the distance between the APs $d_{\rm AP}$ and the height of the APs $H_{\rm AP}$, where the transmit electrical power is $P_{\rm elec} = 9.5$ dBm. Fig.~\ref{fig:sum_rate_threshold} shows that, for all the considered cases of $\left(d_{\rm AP}, H_{\rm AP}\right)$, the proposed CoMP-assisted C-NOMA scheme outperforms all the considered baselines, which demonstrates the capability of integrating CoMP with C-NOMA in beating the ICI effects and increasing the network sum data rate simultaneously. On the other hand, we remark that the average sum data rate per cell achieved by all the considered schemes decrease when the rate threshold $R_{\rm th}$ increases. This is basically due to the fact that as $R_{\rm th}$ increases, the number of UEs that satisfy the target QoS decreases, and therefore, the number of UEs that can be served by the two APs decreases. The high slope of the drop in the achieved sum data rates by all schemes is mainly resulting from the $1/2$ pre-log penalties in the achievable data rate expression derived in the literature \cite{chaaban2016capacity}. \\
\indent In Fig.~\ref{fig:sum_rate_threshold}, one can see that the average achievable sum data rates achieved by all the considered schemes decrease as the distance between the APs $d_{\rm AP}$ decreases and/or the height of the APs $H_{\rm AP}$ increases. In fact, when $d_{\rm AP}$ decreases and/or $H_{\rm AP}$ increases, the coverage area of each AP increases, and hence, the area of the overlapping region between the two APs increases. Therefore, the ICI effects from one cell to the other increases, which explains the performance degradation for the C-NOMA and the NOMA schemes. On the other hand, despite the exploitation of the coordinated broadcasting technique between the two APs, the performance degradation of the CoMP-assisted schemes is resulting from the use of the ZF precoding and the peak-power constraint imposed on VLC systems. Specifically, the use of the JT and the ZF precoding eliminates the ICI effects at the strong and weak UEs. However, as the coverage area of each AP increases, the channel coefficients from the two APs to the strong UEs increases, and hence, the coefficients of the channel matrix $\mathbf{H}_{\rm a,b}$ in \eqref{eq:channel_matrix_cell-center} increases. Consequently, the multiplicative term $\frac{1}{||\mathbf{H}_{\rm a,b}^{\perp}||_{\infty}}$, which is required to make the ZF precoding matrix satisfy the peak-power constraint at the LEDs of the APs, decreases. Therefore, when the distance between the APs $d_{\rm AP}$ decreases and/or the height of the APs $H_{\rm AP}$ increases, the average received SNR at the UEs $\gamma_{Rx} = \frac{R_{\rm p}^2\eta^2P_{\rm elec}\sigma_{\rm s}^2}{\left|\left|\mathbf{H}_{\rm a,b}^{\perp}\right|\right|_{\infty}^2 \sigma_{\rm v}^2}$, which explains the performance degradation for the CoMP-assisted schemes, despite the use of the ICI mitigation techniques.
\subsubsection{Effects of the Area of the PD $A_{\rm PD}$} 
\indent The area of the PD $A_{\rm PD}$ at each UE is an important factor whose effects in the system performance should be analyzed. In this context, Fig.~\ref{fig:rate_vs_PD_area} presents the average sum data rate per cell, achieved by the proposed CoMP-assisted NOMA and CoMP-assisted C-NOMA schemes, the CoMP-assisted OMA scheme, the C-NOMA scheme and the NOMA scheme, versus the area of the PD of each UE $A_{\rm PD}$, when the transmit electrical power is $P_{\rm elec} = 9.5$ dBm and the required data rate threshold is $R_{\rm th} = 10$ [Mbit/Hz]. This figure demonstrates the superiority of the proposed CoMP-assisted schemes compared to the considered baselines and shows that the system performance increases when the area of the PD increases. This can be explained as follows. When the area of the PD increases, the received optical power from the APs at each UE increases, and hence, the power of the received signals increases, which increases the performance of the system. However, this is not the case for the C-NOMA and NOMA schemes, where at a certain point, increasing the area of the PD will increase the received optical power of both the useful and the interfering signals. Due to this, after a certain value of the PD's area, the sum data rate achieved by the C-NOMA and NOMA schemes starts stagnating.
\begin{figure}[t]
   	\centering     
	\includegraphics[width=1\linewidth]{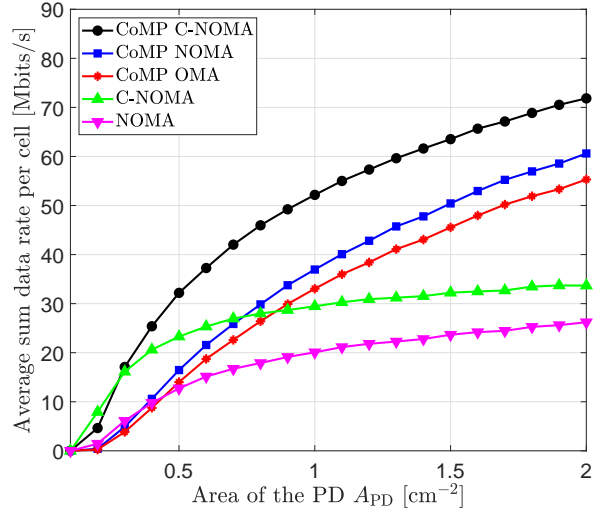}
    \captionof{figure}{Average achievable sum data rate per cell, achieved by the proposed CoMP-assisted NOMA and CoMP-assisted C-NOMA schemes, and the considered baselines, versus the area of the PD of each UE $A_{\rm PD}$.}
    \label{fig:rate_vs_PD_area}
\end{figure}
\begin{figure}[t]
     \centering 
    \includegraphics[width=1\linewidth]{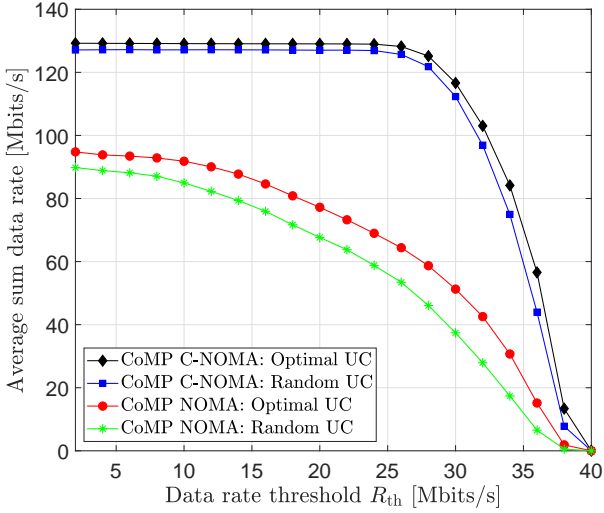}
    \captionof{figure}{Average achievable sum data rate achieved by the proposed CoMP-assisted NOMA and CoMP-assisted C-NOMA schemes versus the required data rate threshold per $R_{\rm th}$ for different UC techniques.}
    \label{fig:UC}
\end{figure}
\subsubsection{The Case of Multiple Users Per Cell} \quad \\
\indent In this part, we consider the case where six UEs are coexisting within the proposed network model, where the first two UEs are strong UEs located near the center of the first cell, and hence, associated to the first cell, the second two UEs are strong UEs located near the center of the second cell, and hence, associated to the second cell, and the remaining two UEs are weak UEs located near the edge of both cells, and hence, associated to both cells. Specifically, the locations of the strong UEs are generated randomly near the centers of the coverage areas of the APs, whereas the locations of the weak UEs are generated randomly near the edges of the cells, in which the orientation of each UE is randomly generated using the measurements-based orientation models proposed in \cite{soltani2018modeling,soltani2019bidirectional,arfaoui2020measurements}. As such, four UEs are associated to each AP. In this case, the UEs can be clustered into groups of three UEs. Each cluster consists of one strong UE associated to the first AP, one strong UE associated to the second AP, and one weak UE served jointly by the two APs. Since multiple clusters are constructed by the two APs, a promising technique to avoid the inter-cluster interference within each cell is to utilize a hybrid multiple access technique, in which the proposed CoMP-assisted NOMA and CoMP-assisted C-NOMA schemes are combined with the conventional FDMA scheme. In particular, the proposed CoMP-assisted NOMA and CoMP-assisted C-NOMA schemes are applied within each cluster of UEs, whereas different clusters are served through different sub-bandwidths. In this context, the UE clustering (UC) policy is another dimension to optimize along with the power allocation scheme. As such, two UC policies are considered for this purpose, namely, the optimal UC and the random UC, which are defined as follows.
\begin{itemize}
    \item Optimal UC: The optimal power allocation coefficients that maximize the sum data rate are obtained for all possible clusters using the proposed power allocation schemes. Then, the \textit{Hungarian} method is applied to obtain the optimal UC policy \cite{huu2020low,elhattab2022ris}.
    \item Random UC: Random disjoint clusters of UEs are generated and then the optimal power allocation coefficients that maximize the sum data rate are obtained for each cluster using the proposed power allocation schemes.
\end{itemize}
\indent Fig.~\ref{fig:UC} presents the average overall sum data rate of the VLC network achieved by the proposed CoMP-assisted NOMA and CoMP-assisted C-NOMA schemes versus the required data rate threshold per UE $R_{\rm th}$ for the considered UC techniques. This figure shows that the optimal UC policy achieves better performance than the random clustering.
\subsection{Minimum Data Rate Performance}
\begin{figure}
\centering     
\includegraphics[width=1\linewidth]{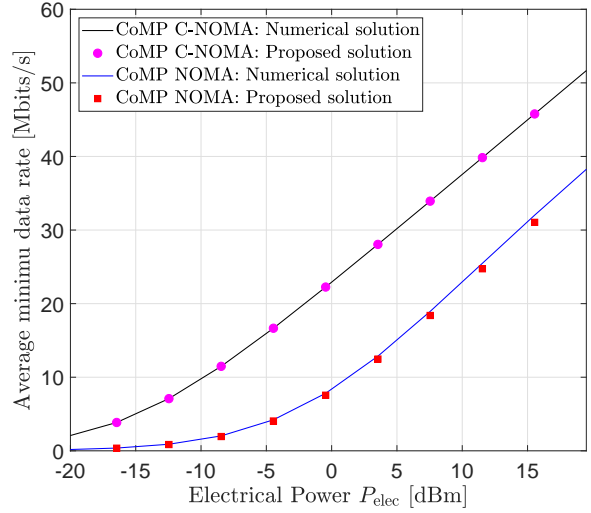}
\caption{Average minimum data rate achieved by the numerical and the proposed solution approaches for the CoMP-assisted NOMA and the CoMP-assisted C-NOMA schemes versus the transmit electrical power $P_{\rm elec}$ at the APs.}
\label{fig:analytical_numerical_min_rate}
\end{figure}
\subsubsection{On the Optimality of the Proposed Solution Approaches} 
\indent Fig.~\ref{fig:analytical_numerical_min_rate} presents the average minimum data rate, achieved by the proposed CoMP-assisted NOMA and the CoMP-assisted C-NOMA schemes, versus the transmit electrical power $P_{\rm elec}$. For the two proposed schemes, the results of both the numerical solutions and the proposed solutions of the power allocation coefficients are presented. The numerical solutions are obtained by solving problems $\mathcal{P}_3$ and $\mathcal{P}_4$ using an off the-shelf optimization solver, whereas the analytical solutions are obtained through our proposed power allocation schemes. Fig.~\ref{fig:analytical_numerical_min_rate} shows that the proposed solutions of the power allocation coefficients of the proposed CoMP-assisted NOMA and CoMP-assisted C-NOMA schemes match perfectly the numerical solutions, which demonstrates their optimality. Moreover, and as was shown by Fig.~\ref{fig:analytical_numerical_sum_rate_power}, Fig.~\ref{fig:analytical_numerical_min_rate} demonstrates the superiority of the CoMP-assisted C-NOMA scheme over the CoMP-assisted NOMA scheme due to the additional power at the strong UEs that is harvested from the APs and then used to relay the data of the weak UE. 
\subsubsection{Effect of the Transmit Electrical Power $P_{\rm elec}$} 
\begin{figure}[t]
	\centering
	\begin{subfigure}[b]{0.5\columnwidth}
		\centering
		\includegraphics[width=\columnwidth,draft=false]{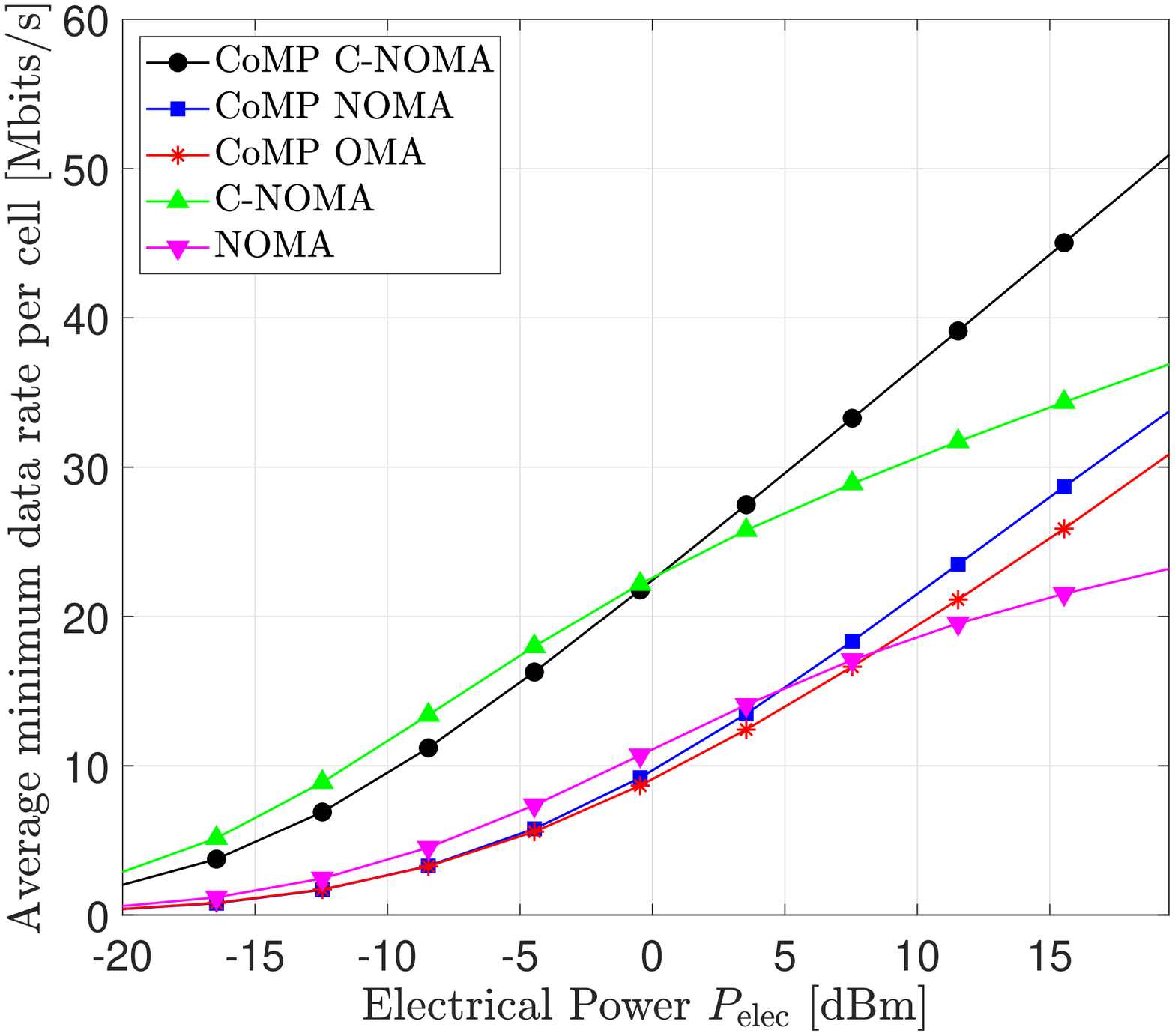}
		\caption{$d_{\rm AP} = 4$m, $H_{\rm AP} = 2.5$m.}
		\label{fig:mI2d4phi45}
	\end{subfigure}%
	~
	\begin{subfigure}[b]{0.5\columnwidth}
		\centering
		\includegraphics[width=\columnwidth,draft=false]{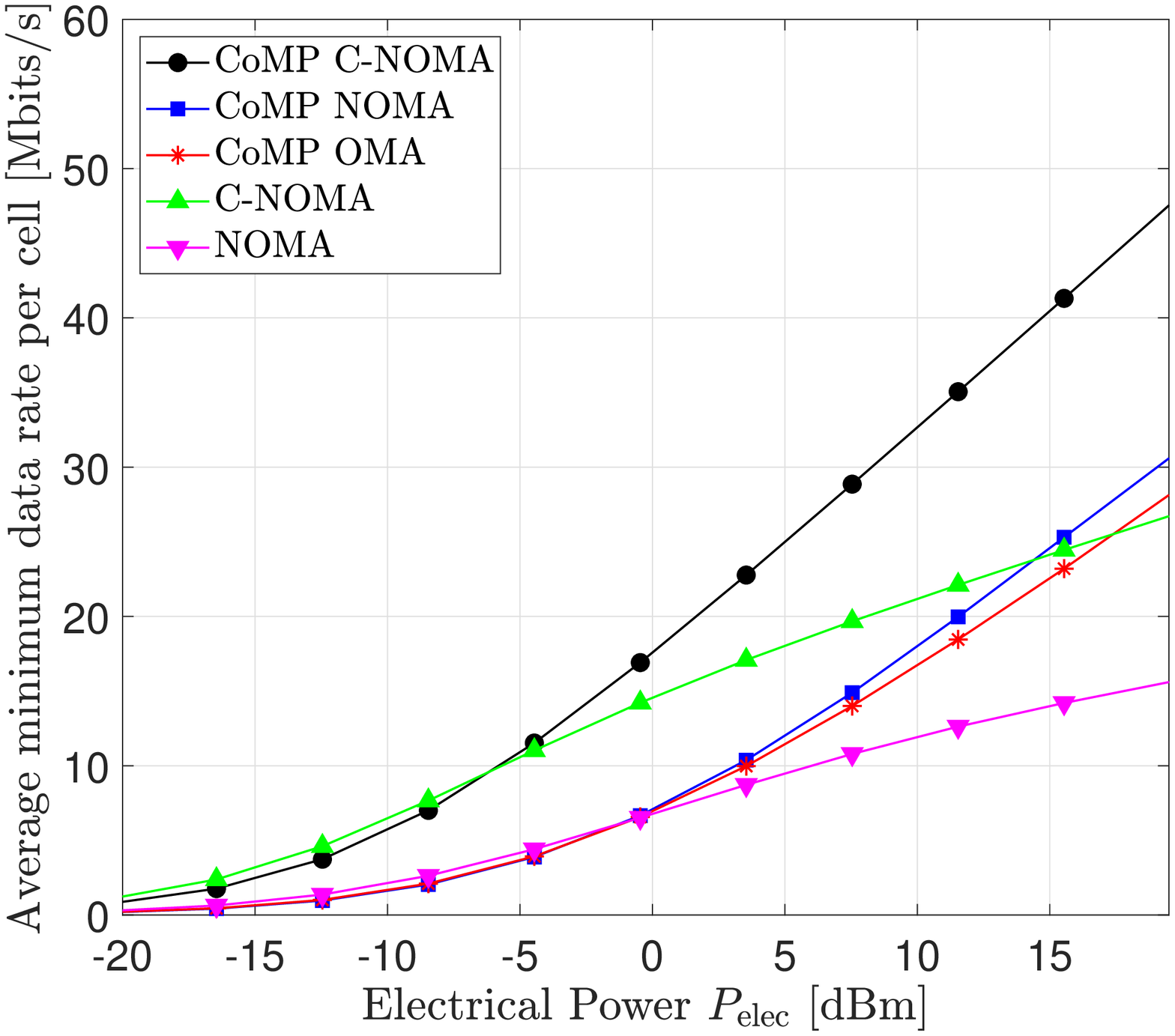}
		\caption{$d_{\rm AP} = 4$m, $H_{\rm AP} = 3$m.}
		\label{fig:mI2d4phi60}
	\end{subfigure}\\
	\begin{subfigure}[b]{0.5\columnwidth}
		\centering
		\includegraphics[width=\columnwidth,draft=false]{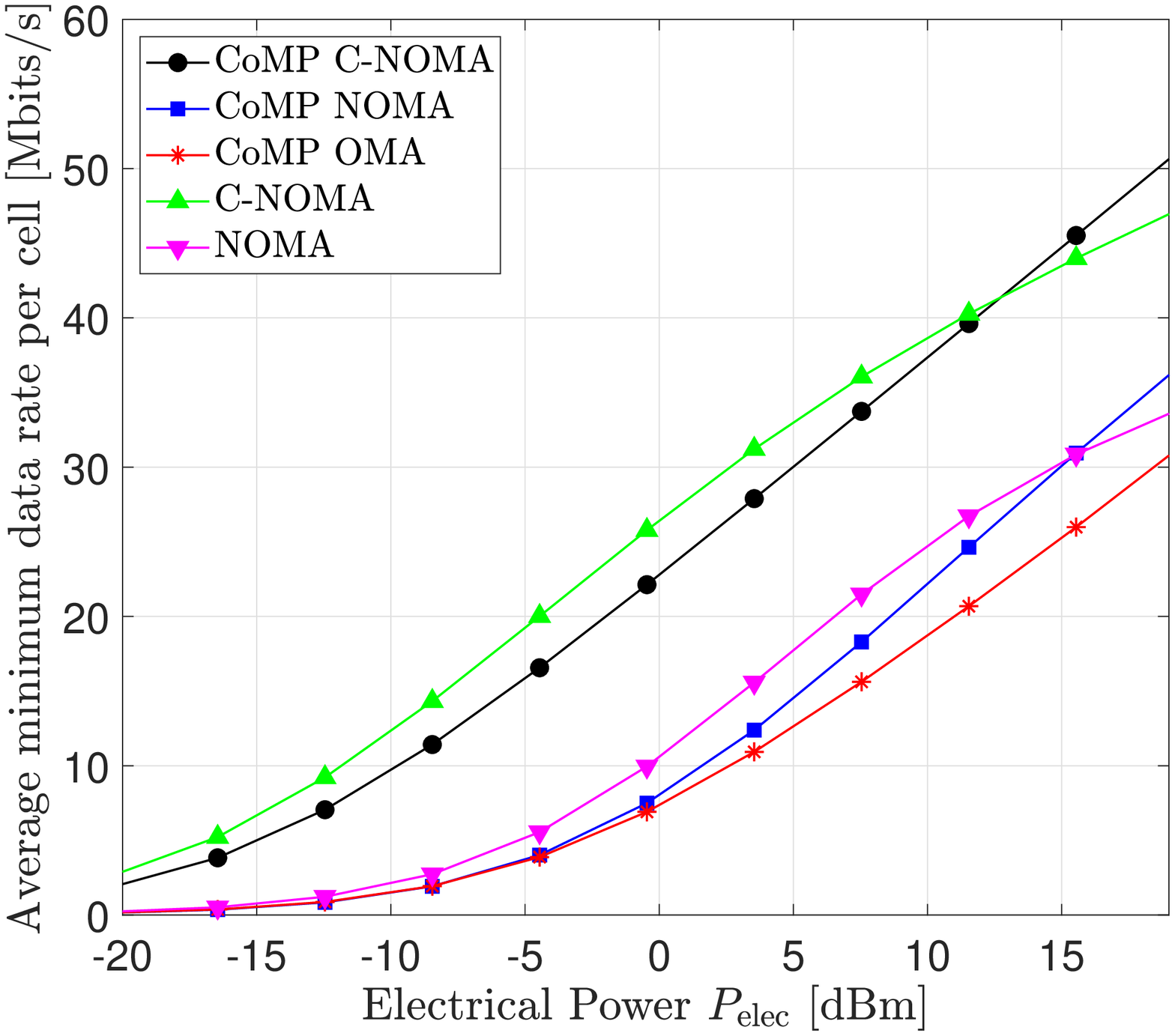}
		\caption{$d_{\rm AP} = 5$m, $H_{\rm AP} = 2.5$m.}
		\label{fig:mI2d5phi45}
	\end{subfigure}%
	~
	\begin{subfigure}[b]{0.5\columnwidth}
		\centering
		\includegraphics[width=\columnwidth,draft=false]{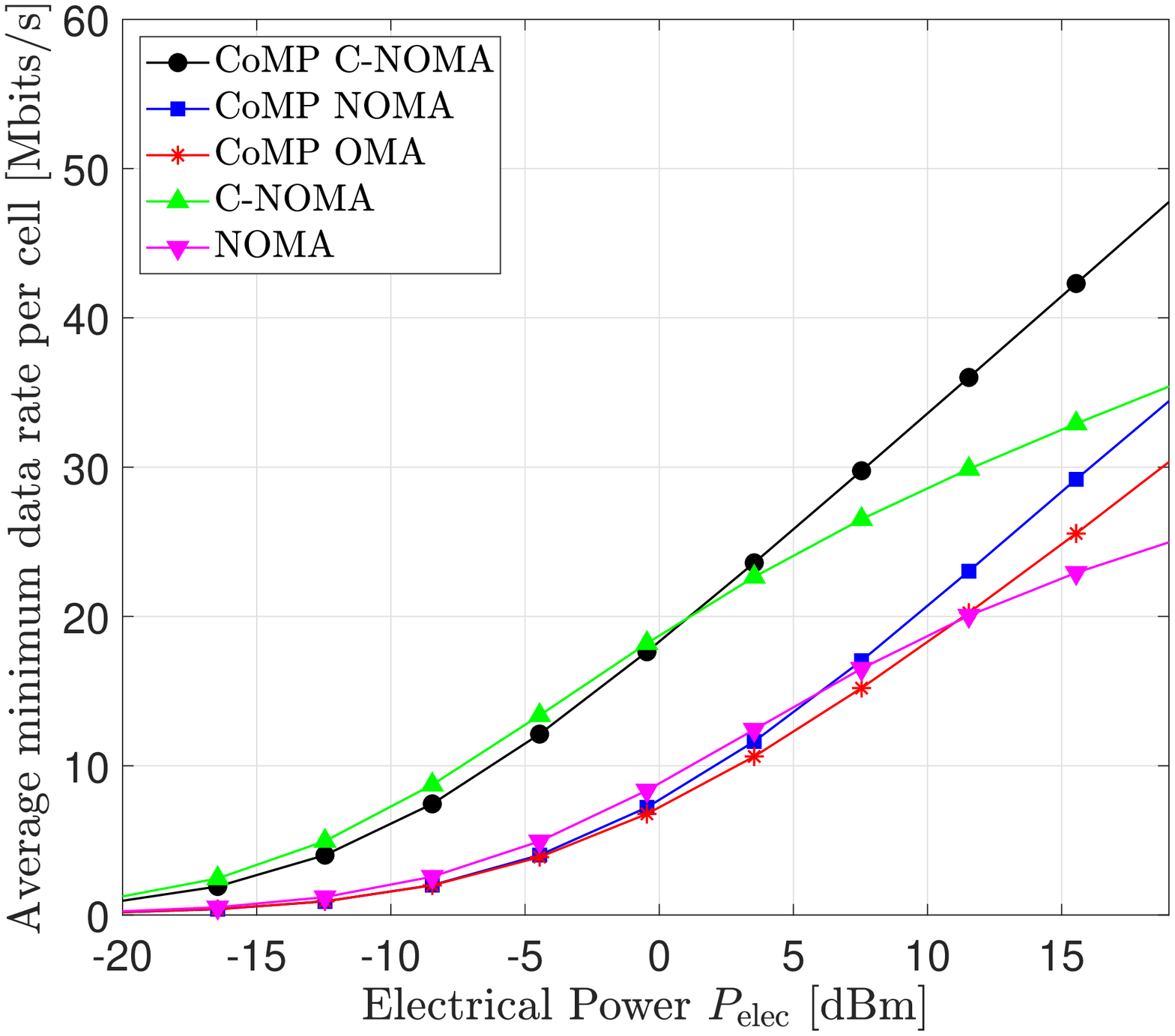}
		\caption{$d_{\rm AP} = 5$m, $H_{\rm AP} = 3$m.}
		\label{fig:mI2d5phi60}
	\end{subfigure}
	\caption{Average minimum data rate per cell, achieved by the proposed CoMP-assisted NOMA and CoMP-assisted C-NOMA schemes, the CoMP-assisted OMA scheme, the C-NOMA scheme and the NOMA scheme, versus the transmit electrical power $P_{\rm elec}$.}
	\label{fig:min_rate_power}
\end{figure}
\indent Fig.~\ref{fig:min_rate_power} presents the average minimum data rate per cell, achieved by the proposed CoMP-assisted NOMA and CoMP-assisted C-NOMA schemes, the CoMP-assisted OMA scheme, the C-NOMA scheme and the NOMA scheme, versus the transmit electrical power $P_{\rm elec}$, for different values of the distance between the APs $d_{\rm AP}$ and the height of the APs $H_{\rm AP}$. Basically, the same observations and interpretation that were remarked for the sum data rate performance in Fig.~\ref{fig:sum_rate_power} are also valid for the minimum data rate performance.
\section{Conclusion}
This paper studies the performance of CoMP transmission in downlink multi-cell NOMA/C-NOMA VLC systems. For a system consisting of two adjacent attocells serving three coexisting users, optimal and low-complexity power control schemes that maximizes the network sum data rate, while guaranteeing target QoS at the end users, and the minimum data rate within the network are derived. In the simulation results, the optimality of the derived power control schemes is verified and the performance of the proposed CoMP-assisted NOMA and CoMP-assisted C-NOMA schemes are compared with those of the CoMP-assisted FDMA scheme, the C-NOMA scheme and the NOMA scheme, where the superiority of the proposed schemes is demonstrated. The extension of the proposed CoMP-assisted NOMA and CoMP-assisted C-NOMA schemes to multiple coordinating attocells, i.e., a number of APs higher than two, can be considered as a potential future research direction. In such a case, although the dynamic power control and the interference management become more challenging, it is expected that a performance enhancement of VLC cellular systems can be achieved. 
\section*{Acknowledgment}
This work was supported in part by the Qatar National Research Fund (a member of Qatar Foundation) under Grant AICC03-0324-200005, in part by Concordia University, in part by the les Fonds de Recherche du Québec—Nature et Technologies (FRQNT), and in part by the Natural Sciences and Engineering Research Council of Canada (NSERC).
\appendices 
\section{The Peak Power Constraint}
\label{Appendix:A}
Let us assume that the precoding matrix $\mathbf{W}$ satisfy the inequality $||\mathbf{W}||_{\infty} \leq 1$ and that the total electrical power $P_{\rm elec}$ satisfy the inequality $P_{\rm elec} \leq \frac{\left(\nu I_{\rm DC}\right)^2}{2}$. The infinity norm of the vector of precoded messages $\mathbf{W} \mathbf{s}$ satisfy $||\mathbf{W} \mathbf{s}||_{\infty} \leq ||\mathbf{W}||_{\infty} ||\mathbf{s}||_{\infty} \leq ||\mathbf{s}||_{\infty}$. On the other hand, we have $|s_1| = \left|\sqrt{(1-\alpha_1) P_{\rm elec}} s_{\rm a} +  \sqrt{\alpha_1P_{\rm elec}} s_{\rm w}\right| \leq \sqrt{(1-\alpha_1) P_{\rm elec}} |s_{\rm a}| +  \sqrt{\alpha_1P_{\rm elec}} |s_{\rm w}|$. Now, recall that, for all $k \in \left\{\rm a,b,w \right\}$, the message $s_{k} \in \left[ -1, 1\right]$. Furthermore, note that, for all $\alpha_1 \in [0,1]$, we have $\sqrt{1-\alpha_1}  +  \sqrt{\alpha_1} \leq \sqrt{2}$ and that the equality holds if and only if $\alpha_1 = \frac{1}{2}$. Hence, $|s_1| \leq \sqrt{2P_{\rm elec}}$. Or, since $P_{\rm elec} \leq \frac{\left(\nu I_{\rm DC}\right)^2}{2}$, we obtain $|s_1| \leq \nu I_{\rm DC}$. Similarly, one can find that $|s_2| \leq \nu I_{\rm DC}$. Based on this, we have $||\mathbf{s}||_{\infty} \leq \nu I_{\rm DC}$, and therefore, $||\mathbf{W} \mathbf{s}||_{\infty} \leq \nu I_{\rm DC}$, which completes the proof.
\section{VLC Channel Model}
\label{Appendix:B}
\indent The downlink channel gain $h$ between an AP and a VLC receiver is expressed as $
h = h^\mathrm{LOS}+h^\mathrm{NLOS}$, such that $h^\mathrm{LOS}$ and $h^\mathrm{NLOS}$ denote the line of sight (LOS and the non line of sight (NLOS) channel gains, respectively. The LOS channel gain is expressed as \cite{arfaoui2021invoking,kahn1997wireless}
\begin{equation}
\label{eq:LOS}
     h^\mathrm{LOS}=\frac{(m+1)A_{\rm PD}}{2 \pi d^2} \cos^m(\phi) \cos(\psi) \mathrm{rect}\left(\frac{\psi}{\Psi}\right),
\end{equation}
where $m=-1/\log_2(\cos(\Phi_{1/2}))$ is the Lambertian emission order of the LEDs, $A_{\rm PD}$ is the area of the PD, $\phi$ is the angle of radiance, $\psi$ is the angle of incidence, $d$ is the distance between the AP and the UE. and $\Psi \in \left[0, \pi/2\right]$ is the field of view (FOV) of the PD of the UE, which defines the acceptance angle of PD. In other words, the FOV is defined such that the UE only detects light whose angle of incidence with respect to the PD's normal vector is less than the FOV \cite{barry1993simulation,lee2011indoor}.\\
\indent Concerning the NLOS components of the channel gain, they can be calculated based on the method described in \cite{NLOSSchulze}. Using the frequency domain instead of the time domain analysis, one is able to consider an infinite number of reflections to have an accurate value of the diffuse link. The environment is segmented into a number of surface elements which reflect the light beams. These surface elements are modeled as Lambertian radiators described by \eqref{eq:LOS} with $m=1$ and FOV of $90^\circ$. Assuming that the entire room can be decomposed into $M$ surface elements, the NLOS channel gain $h^\mathrm{NLOS}$, including an infinite number of reflections between the AP and the VLC receiver, can be expressed as $h^\mathrm{NLOS}=\mathbf{r}^\mathrm{T}\mathbf{G}_\zeta(\mathbf{I}_M-\mathbf{E}\mathbf{G}_\zeta)^{-1}\mathbf{t}$, where the vectors $\mathbf{t}$ and $\mathbf{r}$ respectively represent the LOS link between the AP and all the surface elements of the room and from all the surface elements of the room to the VLC receiver \cite{NLOSSchulze}. The matrix $\mathbf{G}_\zeta={\rm{diag}}(\zeta_1,...,\zeta_M)$ is the reflectivity matrix of all $M$ reflectors; $\mathbf{E}$ is the LOS transfer function of size $M\times M$ for the links between all surface elements, and $\mathbf{I}_M$ is the unity matrix of order $M$. The elements of $\mathbf{E}$, $\mathbf{r}$ and $\mathbf{t}$ are found according to \eqref{eq:LOS}.
\section{Proof of Theorem 1}
\label{Appendix: Theorem 1}
Considering the constraints \eqref{Const:C3} and \eqref{Const:C5}, and by substituting $R_{\rm a}$ and $R_{\rm b}$ with their expressions into \eqref{Const:C3} and \eqref{Const:C5}, and then solving the resulting inequality, we obtain $\alpha_1 \leq \alpha_{\max}$ and $\alpha_2 \leq \alpha_{\max}$. Now, considering the constraints \eqref{Const:C4} and \eqref{Const:C6}, and by substituting $R_{\rm a \rightarrow w}$ and $R_{\rm b \rightarrow w}$ with their expressions into \eqref{Const:C4} and \eqref{Const:C6}, and then solving the resulting inequality, we obtain $\alpha_{\min} \leq \alpha_1$ and $\alpha_{\min} \leq \alpha_2$. Based on this, constraints \eqref{Const:C1}-\eqref{Const:C6} are satisfied if and only if $\alpha_{\min} \leq \alpha_{\max}$, which constitutes the first feasibility condition of problem $\mathcal{P}_1$. Finally, we focus on constraint \eqref{Const:C7}. By substituting $R_{\rm w \rightarrow w}^{\rm VL}$ with its expression into constraint \eqref{Const:C7}, and solving the resulting inequality, we obtain the inequality $g\left(\alpha_1,\alpha_2\right) \geq 0$. Obviously, the last inequality is feasible if and only if it is satisfied by the highest values of $\alpha_1$ and $\alpha_2$, i.e., $\alpha_1 = \alpha_{\max}$ and $\alpha_2 = \alpha_{\max}$, which constitutes the second and last feasibility condition of problem $\mathcal{P}_1$.
\section{Line Search Method for the Sum data rate Maximization}
\label{Appendix: line search sum rate}
We assume that the power allocation coefficient $\alpha_1$ is fixed and we consider the change of variable $\beta = \sqrt{\alpha_2}$. Hence, solving the inequality $g\left(\alpha_1,\alpha_2\right) \geq 0$ is equivalent to solving the inequality $c_1 \beta^2 + c_2 \beta + c_3 \geq 0$ over the segment $\left[\sqrt{\alpha_{\min}}, \sqrt{\alpha_{\max}} \right]$, where $c_1 = c\left(1+t_{\rm v} \right)\Tilde{h}_{\rm 2,w}^2$, $c_2 = c\Tilde{h}_{\rm 1,w}\Tilde{h}_{\rm 2,w} \sqrt{\alpha_1}$, and $c_3 = c\left(1+t_{\rm v} \right)\Tilde{h}_{\rm 1,w}^2\alpha_1 - t \left(c\Tilde{h}_{\rm 1,w}^2 + c\Tilde{h}_{\rm 2,w}^2 + \frac{1}{\gamma_{\rm RX}} \right)$. We compute the discriminant $\Delta = c_2^2 - 4c_1c_2$. If $\Delta \leq 0$, and since $c_1 \geq 0$, then the lowest value of $\alpha_2$ that satisfies $g\left(\alpha_1,\alpha_2\right) \geq 0$ is $\alpha_{\min}$. Otherwise, we compute the ordered roots $\beta_1 = \min \left( \frac{-c_2-\sqrt{\Delta}}{2 c_1}, \frac{-c_2+\sqrt{\Delta}}{2 c_1}\right)$ and $\beta_2 = \max \left( \frac{-c_2-\sqrt{\Delta}}{2 c_1}, \frac{-c_2+\sqrt{\Delta}}{2 c_1}\right)$. In this case, the lowest value of $\alpha_2$ that satisfies $g\left(\alpha_1,\alpha_2\right) \geq 0$ is $\alpha_{\min}$, except the two following two cases. The first case is when $\beta_1 \leq \sqrt{\alpha_{\min}}$ and $\sqrt{\alpha_{\max}} \leq \beta_2$. In this case, there is no feasible solution for $\alpha_2$ and the associated achievable sum data rate is zero. The second case if when $\beta_1 \leq \sqrt{\alpha_{\min}}$ and $\sqrt{\alpha_{\min}} \leq \beta_2 \leq \sqrt{\alpha_{\max}}$. In this case, the lowest value of $\alpha_2$ that satisfies $g\left(\alpha_1,\alpha_2\right) \geq 0$ is $\beta_2^2$.
\bibliographystyle{IEEEtran}
\bibliography{main.bib}
\end{document}